\title{Efficient Decentralized Learning Dynamics for Extensive-Form Coarse Correlated Equilibrium: No Expensive Computation of Stationary Distributions Required}
\author{
	Gabriele Farina\thanks{Equal contribution.},\textsuperscript{\rm 1} Andrea Celli$^\ast$,\textsuperscript{\rm 2} Tuomas Sandholm\textsuperscript{\rm 1,3,4,5}
}
\crefname{problem}{Problem}{Problems}
\crefname{assumption}{Assumption}{Assumptions}
\newtheorem{lemma}{Lemma}
\newtheorem{proposition}{Proposition}
\newtheorem{definition}{Definition}
\tikzset{cross/.style={path picture={
  \draw[black]
(path picture bounding box.south east) -- (path picture bounding box.north west) (path picture bounding box.south west) -- (path picture bounding box.north east);
}}}
\tikzstyle{chanode}=[fill=white,draw=black,circle,cross,inner sep=.8mm]
\tikzstyle{pl1node}=[fill=black,draw=black,circle,inner sep=.5mm]
\tikzstyle{pl2node}=[fill=white,draw=black,circle,inner sep=.55mm]
\tikzstyle{termina}=[fill=white,draw=black,inner sep=.6mm]
\tikzstyle{decpt}  =[fill=black,draw=black,inner sep=.8mm]
\tikzstyle{obspt}  =[fill=white,draw=black,cross,inner sep=0.95mm]
\NewDocumentCommand{\fhighlight}{O{black!15}mm}{%
\fill[#1] (#2.north west) rectangle (#3.south east);
}
\renewcommand{\vec}[1]{\bm{#1}}
\newcommand{\defeq}{\coloneqq}
\newcommand{\emptyseq}{\varnothing}
\newcommand{\Hist}{\mathscr{H}}
\newcommand{\Z}{\mathscr{Z}}
\newcommand{\A}{\mathscr{A}}
\newcommand{\cR}{\mathcal{R}}
\newcommand{\cX}{\mathcal{X}}
\newcommand{\cY}{\mathcal{Y}}
\newcommand{\muv}{\vec{\mu}}
\DeclareMathOperator{\co}{co}
\newcommand{\bbR}{\mathbb{R}}
\newcommand{\bbN}{\mathbb{N}_{>0}}
\NewDocumentCommand{\pure}{O{i}}{%
\ifthenelse{\equal{#1}{}}%
    {\vec{\pi}}%
    {\vec{\pi}^{(#1)}}%
}
\NewDocumentCommand{\hatpure}{O{i}}{%
\ifthenelse{\equal{#1}{}}%
    {\hat{\vec{\pi}}}%
    {\hat{\vec{\pi}}^{(#1)}}%
}
\NewDocumentCommand{\puret}{O{i}O{t}}{\vec{\pi}^{(#1),\,#2}}
\NewDocumentCommand{\Pure}{O{i}}{\Pi^{(#1)}}
\NewDocumentCommand{\tdev}{O{\hatinfo}O{\hatpure[]}}{%
    \phi_{#1\to#2}
}
\NewDocumentCommand{\Mdev}{O{\hatinfo}O{\hatpure[]}}{%
    \vec{M}_{#1\to#2}
}
\NewDocumentCommand{\cJ}{O{i}}{\mathscr{J}^{(#1)}}
\NewDocumentCommand{\cI}{O{i}}{\mathscr{I}^{(#1)}}
\newcommand{\info}{I}
\newcommand{\lambdav}{\vec{\lambda}}
\newcommand{\hatinfo}{\hat{\info}}
\NewDocumentCommand{\Info}{O{i}}{\mathscr{I}^{(#1)}}
\NewDocumentCommand{\Seqs}{O{i}}{\Sigma^{(#1)}}
\NewDocumentCommand{\pc}{O{z}}{p^{(c)}(#1)}
\NewDocumentCommand{\ut}{O{i}}{u^{(#1)}}
\NewDocumentCommand{\bbone}{O{XXX}}{\mathds{1}_{\{#1\}}}
\NewDocumentCommand{\Ph}{O{i}}{\Psi^{(#1)}}
\NewDocumentCommand{\Tph}{O{i}}{(\co\Ph)}
\NewDocumentCommand{\Phj}{O{i}O{\hat\sigma}}{\bar{\Psi}^{(#1)}_{#2}}
\NewDocumentCommand{\Q}{O{i}}{\mathcal{Q}^{(#1)}}
\NewDocumentCommand{\q}{}{\vec{q}}
\NewDocumentCommand{\qt}{O{i}O{t}}{\vec{q}^{(#1),\,#2}}
\NewDocumentCommand{\seq}{m}{\texttt{#1}}
\NewDocumentCommand{\parseq}{O{i}O{\info}}{\sigma^{(#1)}(#2)}
\NewDocumentCommand{\rdev}{O{i}O{\hatinfo}O{\hat{\vec{\pi}}}}{%
    r^{(#1)}_{\muv,#2\to#3}
}
\NewDocumentCommand{\cumr}{O{i}O{T}}{R^{(#1),\,#2}}
\begin{document}

\maketitle

\begin{abstract}

While in two-player zero-sum games the Nash equilibrium is a well-established prescriptive notion of optimal play, its applicability as a prescriptive tool beyond that setting is limited. Consequently, the study of decentralized learning dynamics that guarantee convergence to \emph{correlated} solution concepts in multiplayer, general-sum extensive-form (\emph{i.e.}, tree-form) games has become an important topic of active research. 
The per-iteration complexity of the currently known learning dynamics depends on the specific correlated solution concept considered. For example, in the case of extensive-form correlated equilibrium (EFCE), all known dynamics require, as an intermediate step at each iteration, to compute the stationary distribution of multiple Markov chains, an expensive operation in practice. Oppositely, in the case of normal-form coarse correlated equilibrium (NFCCE), simple no-external-regret learning dynamics that amount to a linear-time traversal of the tree-form decision space of each agent suffice to guarantee convergence.  
This paper focuses on extensive-form coarse correlated equilibrium (EFCCE), an intermediate solution concept that is a subset of NFCCE and a superset of EFCE. Being a superset of EFCE, any learning dynamics for EFCE automatically guarantees convergence to EFCCE. However, since EFCCE is a simpler solution concept, this begs the question: \emph{do learning dynamics for EFCCE that avoid the expensive computation of stationary distributions exist?} This paper answers the previous question in the positive. Our learning dynamics only require the orchestration of no-external-regret minimizers, thus showing that EFCCE is more akin to NFCCE than to EFCE from a learning perspective. Our dynamics guarantees that the empirical frequency of play after $T$ iteration is a $O(1/\sqrt{T})$-approximate EFCCE with high probability, and an EFCCE almost surely in the limit.
%


\end{abstract}

\section{Introduction}

In a normal-form game (i.e., a game with simultaneous moves), a \emph{correlated strategy} is defined as a probability distribution over joint action profiles, and it is customarily modeled via a trusted external mediator that draws an action profile from this distribution, and privately recommends to each player their component. A correlated strategy is a \emph{correlated equilibrium} (CE) if, for each player, the mediator's recommendation is the best action in expectation, assuming all the other players follow their recommended actions~\cite{aumann1974subjectivity}.
CE is an appealing solution concept in real-world strategic interactions involving more than two players with arbitrary (i.e., general-sum) utilities. Indeed, in those settings, the notion of CE overcomes several weaknesses of the \emph{Nash equilibrium} (NE)~\cite{nash1950equilibrium}. In particular, in settings beyond two-players zero-sum games, the NE is prone to equilibrium selection issues, it is computationally intractable (being PPAD-complete even in two-player games~\cite{chen2006settling,daskalakis2009complexity}), and the social welfare that can be attained at an NE may be arbitrarily lower than what can be achieved through a CE~\cite{koutsoupias1999worst,roughgarden2002bad,celli2018}. In contrast, a CE explicitly models synchronization between players, and it is computable in polynomial time in normal-form games.
Moreover, in arbitrary normal-form games, the notion of CE arises naturally from simple decentralized learning dynamics~\cite{foster1997calibrated,hart2000simple}. Decentralized learning dynamics offer a parallel, scalable avenue for computing equilibria, and allow players to circumvent the---often unreasonable---assumption that they have perfect knowledge of other players' payoff functions. In particular, players can adjust their strategies on the basis of their own private payoff function, and on the observed behavior of the other players. In the case of NE, decentralized learning dynamics are only known in the two-player zero-sum setting (see, e.g.,~\citet{Cesa-Bianchi06:Prediction,Hart03:Uncoupled}).

Extensive-form games generalize normal-form games by modeling both sequential and simultaneous moves, as well as imperfect information. Because of their sequential nature, extensive-form games admit various notions of correlated equilibrium, which essentially differ in the time at which each player can decide whether to deviate or to follow recommendations. Three natural extensions of CE to extensive-form games are the \emph{extensive-form correlated equilibrium} (EFCE) by~\citet{von2008extensive}, the \emph{extensive-form coarse correlated equilibrium} (EFCCE) by~\citet{farina2019coarse}, and the \emph{normal-form coarse correlated equilibrium} (NFCCE) by~\citet{celli2018computing}. The set of those equilibria are such that, for any extensive-form game, EFCE $\subseteq$ EFCCE $\subseteq$ NFCCE.
Decentralized no-regret learning dynamics are known for the set of EFCE~\cite{Celli20:NoRegret,farina2021simple,morrill2021efficient}, and they require, as an intermediate step at each iteration, to compute the stationary distribution of multiple Markov chains, which can be an expensive operation in practice.  On the other hand, the set of NFCCE admits simple no-external-regret learning dynamics that amount to a linear-time traversal of the tree-form decision space of each agent~\cite{celli2019learning}. 
This paper studies decentralized learning dynamics converging to the set of EFCCE. In an EFCCE, before the beginning of the game, the mediator draws a recommended action for each of the possible information sets that players may encounter in the game, according to some known probability distribution defined over joint deterministic strategies. These recommendations are not immediately revealed to each player. Instead, the mediator incrementally reveals relevant action recommendations as players reach new information sets.
At each information set the acting player has to commit to following the recommended move before it is revealed to them, by only knowing the mediator's policy used to draw recommendations and the past recommendations issued from the root of the game tree down to the current information set~\cite{farina2019coarse}. If the acting player decides to deviate (i.e., commits to \emph{not following} the recommendation), their recommendations will no longer be issued by the mediator.
Since the set of EFCEs is a subset of the set of EFCCEs~\citep{farina2019coarse}, learning dynamics for EFCE automatically guarantees convergence to EFCCE. However, since EFCCE is a simpler solution concept, the following natural question arises: \emph{do learning dynamics for EFCCE that avoid the expensive computation of stationary distributions exist?} This paper answers the previous question in the positive. 
In particular, we define the notion of \emph{coarse trigger regret} as a particular instantiation of the \emph{phi-regret minimization} framework~\cite{Greenwald03:General,Stoltz07:Learning,Gordon08:No}, and we show that if each player behaves according to a no-coarse-trigger-regret algorithm, then the empirical frequency of play approaches the set of EFCCEs. Then, we provide an efficient algorithm for minimizing coarse trigger regret based on the general template for constructing phi-regret minimizers by~\citet{Gordon08:No}. We show that, in contrast to EFCE, any convex combination of coarse trigger deviation functions admits a fixed point strategy which can be computed in closed form, without requiring to compute the stationary distribution of any Markov chain. 
In particular, our learning dynamics only require the orchestration of no-external-regret minimizers, thus showing that EFCCE is more akin to NFCCE than to EFCE from a learning perspective. Our algorithm guarantees that the empirical frequency of play after $T$ iteration is a $O(1/\sqrt{T})$-approximate EFCCE with high probability, and an EFCCE almost surely in the limit.

\noindent\textbf{Related work.}\
The study of adaptive procedures converging to a CE in normal-form games dates back to the works by~\citet{foster1997calibrated},~\citet{fudenberg1995consistency,fudenberg1999conditional}, and~\citet{hart2000simple,hart2001general}.
In more recent years, a growing effort has been devoted to understanding the relationships between no-regret learning dynamics and equilibria in extensive-form games.
While in two-player zero-sum extensive-form games it is widely known that no-regret learning dynamics converge to an NE (see, e.g.,~\cite{zinkevich2008regret,tammelin2015solving,lanctot2009monte,brown2019solving}) the general case of multi-player general-sum games is less understood.
\citet{celli2019learning} provide variations of the classical CFR algorithm, showing that they provably converge to the set of NFCCEs.
\citet{Celli20:NoRegret} describe learning dynamics that converge to the set of EFCE almost surely in the limit. Their algorithm requires to instantiate and manage a number of \emph{internal regret minimizers} growing linearly in the number of information sets in the game. Each internal regret minimizer internally requires the computation of a stationary distribution of a Markov chain~\citep{Cesa-Bianchi06:Prediction,blum2007external}. \citet{farina2021simple} extend the work by~\citet{Celli20:NoRegret}, giving convergence guarantees to the set of EFCEs at finite time in high probability. The latter paper operates within the phi-regret minimization framework of~\citet{Gordon08:No}, and requires the computation of the stationary distribution of multiple Markov chains at each iteration.
The recent work by~\citet{morrill2021efficient} presents a general framework for achieving hindsight rational learning~\cite{morrill2020hindsight} in extensive-form games for various types of behavioral deviations. It is known that, when framework by~\citet{morrill2021efficient} (EFR) is instantiated with different choices of sets of behavioral deviations, EFR leads to different solution concepts (including EFCCE in the case of \emph{blind causal deviations}). Just like the other mentioned approaches, the EFR framework requires the computation of fixed points of linear transformations at each iteration.
We conjecture that a similar result as this paper (\emph{i.e.}, the existence of a fixed point that can computed in closed form without the need to compute any stationary distribution of a Markov chain) could also be derived within the EFR framework, when blind causal deviations are considered, though we leave exploration of that direction open.


%
%
%

\section{Preliminaries}\label{sec:preliminaries}


\noindent 
The set $\{1,\ldots,n\}$, with $n\in\bbN$, is compactly denoted as $[n]$. Given a set $S$, we denote its convex hull with the symbol $\co S$. 

\subsection{Extensive-Form Games}

An extensive-form game is usually defined by means of an oriented rooted game tree. The set of nodes that are not a leaf of the game tree is denoted by $\Hist$. Each node $h\in\Hist$ is called a \emph{decision node} and has associated a player that acts at that node by choosing one action from the set of available actions at $h$, which we denote by $\A(h)$.
In an $n$-player extensive-form game, the set of players is the set $[n]\cup\{c\}$, where $c$ denotes the chance player, which is a fictitious player that selects actions according to fixed probability distributions representing exogenous stochasticity of the environment (e.g., a roll of the dice).
%
Leaves of the game tree are called \emph{terminal nodes}, and represent the outcomes of the game; their set of available actions is conventionally set to $\emptyset$ and they are not assigned to an acting player. The set of such nodes is denoted by $\Z$. When the game transitions to a terminal node $z\in\Z$, payoffs are assigned to each non-chance player according to the set of payoff functions $\{u^{(i)}:\Z\to\bbR\}_{i\in[n]}$. 
Moreover, we let $p^{(c)}:\Z\to(0,1)$ denote the function assigning to each terminal node $z$ the product of probabilities of chance moves encountered on the path from the root of the game tree to $z$.

\noindent \textbf{Imperfect information.} 
The set of decision nodes of each player $i\in[n]$ is partitioned into a collection $\Info$ of sets of nodes, called \emph{information sets}. Each information set $\info\in\Info$ groups together nodes that Player $i$ cannot distinguish between when Player $i$ acts. Therefore, we have that $\A(h)=\A(h')$ for any pair of nodes $h,h'\in \info$. Then, we can safely write $\A(\info)$ to indicate the set of actions available at any decision node belonging to $\info\in\Info$.
As it is customary in the literature, we assume that the extensive-form game has \emph{perfect recall}, that is, information sets are such that no player forgets information once acquired. This means that, for any player $i\in[n]$ and any two nodes $h,h'\in \info$, with $\info\in\Info$, the sequence of Player $i$'s actions from the root to $h$ must coincide with the sequence of Player $i$'s actions from the root to $h'$. Therefore, for any $i\in[n]$, we can define a partial ordering $\prec$ on $\Info$ as follows: for any $\info,\info'\in\Info$, $\info\prec \info'$
if there exist nodes $h'\in \info'$ and $h\in \info$ such that the path from the root of the game to $h'$ passes through $h$. An immediate consequence of perfect recall is that for any $i\in[n]$, $\Info$ is well-ordered by $\prec$ (i.e., given $\info\in\Info$, the set of its predecessors forms a chain).

\noindent \textbf{Sequences.} For any player $i\in[n]$, information set $\info\in\Info$, and action $a\in\A(\info)$, we denote by $\sigma=(\info,a)$ the sequence of Player $i$'s actions on the path from the root of the game tree down to action $a$ (included) taken at any decision node in information set $\info$. We denote by $\emptyseq$ the \emph{empty sequence} of Player $i$. Then, the set of Player $i$'s sequences is defined as $\Seqs\defeq\{(\info,a):\info\in\Info, a\in\A(\info)\}\cup\{\emptyseq\}$. Given an information set $\info\in\Info$, we denote by $\parseq\in\Seqs$ the \emph{parent sequence} of $\info$, that is, the last sequence encountered by Player $i$ on the path from the root of the game tree to any node in $\info$. Whenever $\parseq=(\info',a)$, we say that $\info$ is \emph{immediately reachable} from sequence $\parseq$. If Player $i$ never acts before $\info$, then $\parseq=\emptyseq$, and we say that information set $\info$ is a \emph{root information set} of Player $i$. Moreover, for any $z\in\Z$, $\parseq[i][z]\in\Seqs$ is the last sequence of Player $i$'s actions encountered on the path from the root of the game tree to terminal node $z$. We let $\parseq[i][z]=\emptyseq$ if Player $i$ never plays on the path from the root to $z$. 
Analogously to what we did for information sets, we introduce a partial ordering on sequences: for every $i\in[n]$, and any pair $\sigma,\sigma'\in\Seqs$, the relation $\sigma\prec\sigma'$ holds if $\sigma=\emptyseq\neq\sigma'$, or if the sequences are such that $\sigma=(\info,a)$, $\sigma'=(\info',a')$, and the set of Player $i$'s actions on the path from the root to $\info'$ include playing action $a$ at one node belonging to $\info$.
For any $i\in[n]$, $\sigma\in\Seqs$, and $\info\in\Info$, we write $\sigma\succ \info$ to mean that the sequence of Player $i$'s actions $\sigma$ must lead the player to pass through $\info$, formally $\sigma=(\info',a')\in\Seqs\setminus\{\emptyseq\}\wedge \info'\succ \info$. Moreover, for $\sigma\in\Seqs$ and $\info\in\Info$, we write $\sigma\succeq I$ when $\sigma\succ I$ or $\sigma=(I,a)$.
Then, we let $\Seqs_\info\defeq\{\sigma\in\Seqs:\sigma\succeq \info\}\subseteq\Seqs$ be the set of Player $i$'s sequences that terminate at $\info$ or any of its descendant information sets, and $\Z^{(i)}_\info\defeq \mleft\{z\in\Z:\parseq[i][z]\succeq \info\mright\}$ be the set of terminal nodes reachable from information set $\info\in\Info$.

\noindent\textbf{Sequence-form strategies.} A \emph{sequence-form strategy} for Player $i\in[n]$ is a vector $\q\in\mathbb{R}^{|\Seqs|}_{\ge0}$ such that each entry $\q[(\info,a)]$ specifies the product of the probabilities of playing all of Player $i$'s actions on the path from the root down to action $a$ at information set $\info$ (included) \cite{Koller96:Efficient,Romanovskii62:Reduction,Stengel96:Efficient}. The set of valid sequence-form strategies for Player $i$ is defined by some linear probability-mass-conservation constraints. Formally, 
\begin{definition}\label{def:seq_polytope}
The \emph{sequence-form strategy polytope} for Player $i\in[n]$ is the convex polytope
$\Q\defeq\{\q\in\mathbb{R}^{|\Seqs|}_{\ge0}:\q[\emptyseq]=1,\text{\normalfont and } \q[\parseq]=\sum_{a\in\A(\info)}\q[(\info,a)],\forall \info\in\Info\}.$
\end{definition}
\noindent We let 
$\Q_\info\defeq\{\q\in\mathbb{R}^{|\Seqs_\info|}_{\ge0}:\sum_{a\in\A(j)}\q[(\info,a)]=1,\text{\normalfont and }\q[\parseq[i][\info']]=\sum_{a\in\A(\info')}\q[(\info',a)],\forall \info'\succ \info\}$
be the set of sequence form strategies only specifying Player $i$'s behavior at information set $\info$ and all of its descendant. 
The set of \emph{deterministic sequence-form strategies} for Player $i\in[n]$ is defined as $\Pure\defeq\Q\cap\{0,1\}^{|\Seqs|}$, and the set of \emph{deterministic sequence-form strategies for the subtree rooted at $\info$} is $\Pure_\info\defeq\Q_\info\cap\{0,1\}^{|\Seqs_\info|}$. 
Kuhn's Theorem implies that, for any $i\in[n]$, $\Q=\co\Pure$, and $\Q_\info=\co\Pure_\info$ for any $\info\in\Info$~\cite{kuhn1953}.
%
We denote as $\Pi\defeq\bigtimes_{i\in[n]}\Pure$ the set of \emph{joint} deterministic sequence-form strategies of all the players. Moreover, $\pure[-i]\in\bigtimes_{j\ne i}\Pure[j]$ is a tuple specifying one deterministic sequence form strategy for each player other than $i$. 
It is often useful to express Player $i$'s payoff function as a function of joint deterministic sequence-form strategy profiles belonging to $\Pi$. With a slight abuse of notation let $\ut:\Pi\to\bbR$ be such that, for each $\pure[]=(\pure[1],\ldots,\pure[n])\in\Pi$,
\[
\ut(\pure[])\defeq \sum_{z\in\Z}p^{(c)}(z) \ut(z)\bbone[\pure[i][\parseq[i][z]]=1\forall i\in[n]].
\]

\subsection{Regret Minimization and Phi-Regret Minimization}\label{sec:preliminaries regret minimization}

A \emph{regret minimizer} for a set $\cX$ is an abstract model for a decision maker that repeatedly interacts with a black-box environment. At each time $t$, a regret minimizer provides two operations: (i) \textsc{NextElement} will make the regret minimizer output an element $\vec{x}^t\in\cX$; (ii) \textsc{ObserveUtility}$(\ell^t)$ will inform the regret minimizer of the environment's feedback in the form of a linear utility function $\ell^t:\cX\to\bbR$ which may depend adversarially on past choices $\vec{x}^1,\ldots,\vec{x}^{t-1}$ of the regret minimizer.
At each $t$, the regret minimizer will output a decision $\vec{x}^t$ on the basis of previous outputs $\vec{x}^1,\ldots,\vec{x}^{t-1}$ and corresponding observed utility functions $\ell^1,\ldots,\ell^{t-1}$. However, no information about future losses is available to the decision maker. 
The performance of a regret minimizer is usually evaluated in terms of its \emph{cumulative regret}
\begin{equation}\label{eq:external regret}
R^T\defeq\max_{\vec{x}^\ast\in\cX}\sum_{t=1}^T\mleft(\ell^t(\vec{x}^\ast)-\ell^t(\vec{x}^t)\mright).
\end{equation}
The cumulative regret represents how much Player $i$ would have gained by always playing the best action in hindsight, given the history of utility functions observed up to iteration $T$.
Then, the objective is to guarantee a cumulative regret growing asymptotically sublinearly in the time $T$. For example, various regret minimizers guarantee a cumulative regret $R^T=O(\sqrt{T})$ at all times $T$ for any convex and compact set $\cX$ (see, e.g.,~\citet{Cesa-Bianchi06:Prediction}).

A \emph{phi-regret minimizer}~\cite{Stoltz07:Learning,Greenwald03:General} is a generalization of the notion of regret minimizer which can be defined as follows.
\begin{definition}\label{def:phi regret minimizer}
Given a set $\cX$ of points and a set $\Phi$ of linear transformations $\phi:\cX\to\cX$, a \emph{phi-regret minimizer relative to $\Phi$ for the set $\cX$}---abbreviated \emph{``$\Phi$-regret minimizer''}---is an object with the same semantics and operations of a regret minimizer, but whose quality metric is its \emph{cumulative phi-regret relative to $\Phi$} (or \emph{$\Phi$-regret} for short)
\begin{equation}\label{eq:cum phi regret}
    R^T \defeq \max_{\phi^* \in \Phi} \sum_{t=1}^T \Big( \ell^t(\phi^*(\vec{x}^t)) - \ell^t(\vec{x}^t) \Big).
\end{equation}
The goal for a phi-regret minimizer is to guarantee that its phi-regret grows asymptotically sublinearly in $T$.
\end{definition}

We observe that a regret minimizer is a special case of a phi-regret minimizer as the cumulative regret defined in Equation~\eqref{eq:external regret} can be obtained from Equation~\eqref{eq:cum phi regret} by setting $\Phi =\{\cX \ni \vec{x} \mapsto \hat{\vec{x}}: \hat{\vec{x}}\in\cX\}$. 

A general construction by~\citet{Gordon08:No} gives a way to construct a $\Phi$-regret minimizer for $\cX$ starting from any standard regret minimizer for the set of functions $\Phi$. Specifically, let $\cR_\Phi$ be a deterministic regret minimizer for the set of transformations $\Phi$ whose cumulative regret grows sublinearly, and assume that every $\phi\in\Phi$ admits a fixed point $\phi(\vec{x}) = \vec{x}\in\cX$. Then, a $\Phi$-regret minimizer $\cR$ can be constructed starting from $\cR_\Phi$ as follows:
\begin{itemize}
    \item Each call to $\cR.\textsc{NextElement}$ first calls \textsc{NextElement} on $\cR_\Phi$ to obtain the next transformation $\phi^t$. Then, a fixed point $\vec{x}^t = \phi^t(\vec{x}^t)$ is computed and output.
    \item Each call to $\cR.\textsc{ObserveUtility}(\ell^t)$ with linear utility function $\ell^t$ constructs the linear utility function $L^t: \phi \mapsto \ell^t(\phi(\vec{x}^t))$, where $\vec{x}^t$ is the last-output strategy, and passes it to $\cR_\Phi$ by calling $\cR_\Phi.\textsc{ObserveUtility}(L^t)$.
\end{itemize}

\section{Coarse Trigger Regret and Relationship with EFCCE}
\label{sec:efcce}

In this section we describe the notion of \emph{coarse trigger deviation function} building on an idea by \citet[Section 3]{Gordon08:No}. Then, we use this notion to formally characterize the set of EFCCEs, and to define the notion of \emph{coarse trigger regret minimizer} as an instance of a phi-regret minimizer. Finally, we establish a formal connection between the set of EFCCEs and the behavior of agents minimizing their coarse trigger regret. 

\subsection{Coarse Trigger Deviation Functions}\label{sec: deviation functions}

For any $i\in[n]$, information set $\hatinfo\in\Info$, and $\hatpure[]\in\Pure_{\hatinfo}$, a \emph{coarse trigger deviation function for $\hatinfo$ and $\hatpure[]$} is a linear function which manipulates $|\Seqs|$-dimensional vectors so that any deterministic sequence form strategy that do not lead Player $i$ down to $\hatinfo$ is left unmodified. On the other hand, if a deterministic sequence form strategy prescribes Player $i$ to pass through $\hatinfo$, then its behavior at $\hatinfo$ and all of its descendant information sets is replaced with the behavior specified by the continuation strategy $\hatpure[]$.\footnote{Our definition of coarse trigger deviation function can be seen as the sequence-form counterpart to the \emph{blind causal behavioral deviations} defined by~\citet{morrill2021efficient}.
}

\begin{definition}[Coarse Trigger Deviation Function]\label{def: coarse trigger dev func}
Given an information set $\hatinfo\in\Info$, and a continuation strategy $\hatpure[]\in\Pure_{\hatinfo}$, we say that a linear function $f:\bbR^{|\Seqs|}\to\bbR^{|\Seqs|}$ is a \emph{coarse trigger deviation function corresponding to information set $\hatinfo$ and continuation strategy $\hatpure[]$} if the following two conditions hold:
\begin{itemize}
    \item $f\mleft(\pure[]\mright)=\pure[]$, for all $\pure[]\in\Pure:\pure[]\mleft[\parseq[i][\hatinfo]\mright]=0$;
    
    \item  for any $\sigma\in\Seqs$, and $\pure[]\in\Pure:\pure[]\mleft[\parseq[i][\hatinfo]\mright]=1$, 
    \begin{equation*}
        f(\pure[])[\sigma] = \begin{cases}
            \pure[]{}[\sigma] & \text{if }\sigma \not\succeq \hatinfo\\
            \hatpure[]{}[\sigma] & \text{if }\sigma\succeq \hatinfo\\
        \end{cases}.
    \end{equation*}
\end{itemize}
\end{definition}

For any $\hatinfo\in\Info$ and $\hatpure[]\in\Pure_{\hatinfo}$, it is useful to instantiate a coarse trigger deviation function in the form of a linear map $\tdev:\bbR^{|\Seqs|}\ni\pure[]\mapsto \Mdev\pure[]$, where $\Mdev\in\bbR^{|\Seqs|\times|\Seqs|}_{\ge 0}$ is the matrix such that, for any $\sigma_r,\sigma_c\in\Seqs$,
\begin{equation*}\label{eq:Mdev}
    \Mdev{}[\sigma_r, \sigma_c] = \begin{cases}
        1 & \text{if } \sigma_c \not\succeq \hatinfo \text{ and } \sigma_r = \sigma_c \\
        \hatpure[][\sigma_r] & \text{if } \sigma_c = \parseq[i][\hatinfo] \text{ and } \sigma_r \succeq \hatinfo\\
        0 & \text{otherwise}
    \end{cases}.
\end{equation*}
As a simple example of how such linear mappings are built is given in in Figure~\ref{fig:example_mapping}, where it is reported the matrix corresponding to $\tdev[\textsc{b}][\hatpure[]]$ with $\hatpure[]$ being the continuation strategy corresponding to always playing $\seq{4}$ at information set \textsc{b}.

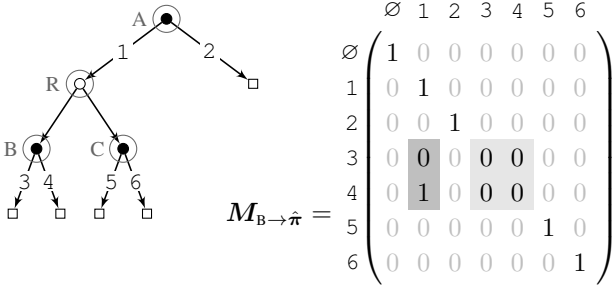
\begin{figure}[th]
\raisebox{4mm}{\begin{minipage}{0.22\textwidth}
        \begin{tikzpicture}[>=latex',baseline=0pt,scale=.9]
    \def\done{.8*1.6}
    \def\dtwo{.40*1.6}
    \def\dleaf{.22*1.6}
    \def\dvert{-.8*1.2}

    \node[fill=black,draw=black,circle,inner sep=.5mm] (A) at (0, 0) {};
    \node[fill=white,draw=black,inner sep=.6mm] (l0) at ($(\done, \dvert)$) {};
    \node[fill=white,draw=black,circle,inner sep=.5mm] (X) at ($(-\done,\dvert)$) {};
    \node[fill=black,draw=black,circle,inner sep=.5mm] (B) at ($(X) + (-\dtwo, \dvert)$) {};
    \node[fill=black,draw=black,circle,inner sep=.5mm] (C) at ($(X) + (\dtwo, \dvert)$) {};
    \node[fill=white,draw=black,inner sep=.6mm] (l1) at ($(B) + (-\dleaf, \dvert)$) {};
    \node[fill=white,draw=black,inner sep=.6mm] (l2) at ($(B) + (\dleaf, \dvert)$) {};
    \node[fill=white,draw=black,inner sep=.6mm] (l3) at ($(C) + (-\dleaf, \dvert)$) {};
    \node[fill=white,draw=black,inner sep=.6mm] (l4) at ($(C) + (\dleaf, \dvert)$) {};
    
\iftrue
    \draw[semithick] (A) edge[->] node[fill=white,inner sep=.9] {\small\seq{1}} (X);
    \draw[semithick] (A) edge[->] node[fill=white,inner sep=.9] {\small\seq{2}} (l0);
    \draw[->,semithick] (B) --node[fill=white,inner sep=.9] {\small\seq{3}} (l1);
    \draw[->,semithick] (B) --node[fill=white,inner sep=.9] {\small\seq{4}} (l2);
    \draw[->,semithick] (C) --node[fill=white,inner sep=.9] {\small\seq{5}} (l3);
    \draw[->,semithick] (C) --node[fill=white,inner sep=.9] {\small\seq{6}} (l4);

    \draw[->,semithick] (X) -- (B);
    \draw[->,semithick] (X) -- (C);
    
    \draw[black!60!white] (X) circle (.2);
    \node[black!60!white]  at ($(X) + (-.4, 0)$) {\textsc{r}};

    \draw[black!60!white] (A) circle (.2);
    \node[black!60!white]  at ($(A) + (-.4, 0)$) {\textsc{a}};
    
    \draw[black!60!white] (B) circle (.2);
    \node[black!60!white]  at ($(B) + (-.38, 0)$) {\textsc{b}};
    
    \draw[black!60!white] (C) circle (.2);
    \node[black!60!white]  at ($(C) + (-.38, 0)$) {\textsc{c}};
\else
    \draw[semithick] (A) edge[->]  (X);
    \draw[->,semithick] (A) -- (Y);
    \draw[->,semithick] (B) -- (l1);
    \draw[->,semithick] (B) -- (l2);
    \draw[->,semithick] (C) -- (l3);
    \draw[->,semithick] (C) -- (l4);
    \draw[->,semithick] (D1) -- (l5);
    \draw[->,semithick] (D1) -- (l7);
    \draw[->,semithick] (D2) --(l8);
    \draw[->,semithick] (D2) -- (l10);
    \draw[->,semithick] (X) -- (B);
    \draw[->,semithick] (X) -- (C);
    \draw[->,semithick] (Y) -- (D1);
    \draw[->,semithick] (Y) -- (D2);
\fi

\end{tikzpicture}
\end{minipage}}
\begin{minipage}{0.22\textwidth}
        \begin{tabular}{p{4cm}}
               \hspace*{-1.2cm} $\Mdev[\textsc{b}][\hatpure[]]=$\begin{tikzpicture}[baseline=-\the\dimexpr\fontdimen20\textfont2\relax ]
            \tikzset{every left delimiter/.style={xshift=1.5ex},
                     every right delimiter/.style={xshift=-1ex}};
            \matrix [matrix of math nodes,left delimiter=(,right delimiter=),row sep=.003cm,column sep=.005cm,color=black!25](m)
            {
            |[black]| 1 &           0 & 0 & 0 & 0 & 0 & 0 \\
                      0 & |[black]| 1 & 0 & 0 & 0 & 0 & 0 \\
                      0 & 0 & |[black]| 1 & 0 & 0 & 0 & 0 \\
                      0 & |[black]| 0 & 0 & |[black]| 0 & |[black]| 0 & 0 & 0 \\
                      0 & |[black]| 1 & 0 & |[black]| 0 & |[black]| 0 & 0 & 0 \\
                      0 & 0 & 0 & 0 & 0 & |[black]| 1 & 0 \\
                      0 & 0 & 0 & 0 & 0 & 0 & |[black]| 1 \\
            };
            
            \node[left=3pt of m-1-1] (left-0) {\small$\emptyseq$};
            \node[left=3pt of m-2-1] (left-1) {\small\seq{1}};
            \node[left=3pt of m-3-1] (left-2) {\small\seq{2}};
            \node[left=3pt of m-4-1] (left-3) {\small\seq{3}};
            \node[left=3pt of m-5-1] (left-4) {\small\seq{4}};
            \node[left=3pt of m-6-1] (left-5) {\small\seq{5}};
            \node[left=3pt of m-7-1] (left-6) {\small\seq{6}};
            
            \node[above=3pt of m-1-1] (top-0) {\small$\emptyseq$};
            \node[above=3pt of m-1-2] (top-1) {\small\seq{1}};
            \node[above=3pt of m-1-3] (top-2) {\small\seq{2}};
            \node[above=3pt of m-1-4] (top-3) {\small\seq{3}};
            \node[above=3pt of m-1-5] (top-4) {\small\seq{4}};
            \node[above=3pt of m-1-6] (top-5) {\small\seq{5}};
            \node[above=3pt of m-1-7] (top-6) {\small\seq{6}};
            
            \begin{pgfonlayer}{back}
            \fhighlight[black!25!white]{m-4-2}{m-5-2}
            \fhighlight[black!10!white]{m-4-4}{m-5-5}
            \end{pgfonlayer}
        \end{tikzpicture}
        \end{tabular}
\end{minipage}
        \caption{(Left) A simple sequential game with two players. Black round nodes are decision nodes of Player 1, white round nodes are decision nodes of Player 2. White square nodes represent terminal nodes. The set of sequences of Player 1 is $\Seqs[1]=\{\emptyseq,\seq{1},\ldots,\seq{6}\}$. (Right) Example of the matrix defining a coarse trigger deviation.
        }
        \label{fig:example_mapping}
\end{figure}

Let $\Ph\defeq\mleft\{\tdev:\hatinfo\in\Info,\hatpure[]\in\Pure_{\hatinfo}\mright\}$ be the set of all possible linear mappings defining coarse trigger deviation functions for Player $i$. 
Then, we define the concept of \emph{coarse trigger regret minimizer}. This notion will be shown to have a close connection with extensive-form coarse correlated equilibria.

\begin{definition}[Coarse Trigger Regret Minimizer]
For every $i\in[n]$, we call \emph{coarse trigger regret minimizer} for player i any $\Ph$-regret minimizer for the set of deterministic sequence-form strategies $\Pure$.
\end{definition}

\subsection{Extensive-Form Coarse Correlated Equilibria}

Equipped with the notion of coarse trigger deviation function we can provide the following definition of \emph{extensive-form coarse correlated equilibria} (EFCCE). 
\begin{definition}[$\epsilon$-EFCCE, EFCCE]\label{def:efcce}
    Given $\epsilon\ge 0$, a probability distribution $\muv\in\Delta^{|\Pi|}$ is an $\epsilon$-approximate EFCCE ($\epsilon$-EFCCE for short) if, by letting $\pure[]=(\pure[1],\ldots,\pure[n])\in\Pi$, for every player $i\in[n]$, and every coarse trigger deviation function $\tdev\in\Ph$, it holds 
	\begin{equation}\label{eq:efcce}
    \mathbb{E}_{\pure[]\sim\muv} \mleft[\ut\mleft(\tdev\mleft(\pure\mright),\pure[-i]\mright) - \ut\mleft(\pure[]\mright)\mright] \le \epsilon.
	\end{equation}
	A probability distribution $\muv\in\Delta^{|\Pi|}$ is an EFCCE if it is a 0-EFCCE.
\end{definition}
\noindent The above definition can be easily interpreted as the canonical definition of EFCCE by~\citet{farina2019coarse}. In their terminology, Equation~\eqref{eq:efcce} means that the expected utility of any trigger agent $(\hatinfo,\hatpure[])$ is never larger than the expected utility that Player $i$ would obtain by following recommendations by more than the amount $\epsilon$.

We can now prove one of the central results of the paper, which shows that if each player $i\in[n]$ in the game plays according to a $\Ph$-regret minimizer, then the empirical frequency of play approaches the set of EFCCEs (all omitted proofs are reported in Appendix~\ref{sec:appendix omitted proofs}).
\begin{restatable}{theorem}{efcceRegret}
\label{thm:euclid}
For each player $i\in[n]$, let $\puret[i][1],\ldots,\puret[i][T]$ be a sequence of deterministic sequence form strategies with cumulative $\Ph$-regret $\cumr$ with respect to the sequence of linear functions
\[
\ell^{(i),\,t}:\Pure\ni\pure\mapsto \ut\mleft(\pure,\puret[-i][t]\mright),
\]
and let $\muv\in\Delta^{|\Pi|}$ be the corresponding empirical frequency of play, defined as the probability distribution such that, for each $\pure[]=(\pure[1],\ldots,\pure[n])\in\Pi$, 
\[
\muv[\pure[]]\defeq \frac{1}{T}\sum_{t=1}^T\bbone[\pure[]=(\puret[1][t],\ldots,\puret[n][t])].
\]
Then, the empirical frequency of play $\muv$ is an $\epsilon$-EFCCE, with $\epsilon\defeq \frac{1}{T}\max_{i\in [n]}\cumr$.
\end{restatable}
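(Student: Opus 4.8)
The plan is to unfold the definitions of the two objects involved---the empirical frequency of play $\muv$ and the cumulative $\Ph$-regret $\cumr$---and to observe that the EFCCE deviation gap appearing in~\eqref{eq:efcce} decomposes, round by round, into exactly the per-round contributions to the cumulative $\Ph$-regret, so that the bound on $\epsilon$ becomes immediate.

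First I would fix an arbitrary player $i\in[n]$ and an arbitrary coarse trigger deviation function $\tdev\in\Ph$ (equivalently, an information set $\hatinfo\in\Info$ together with a continuation strategy $\hatpure[]\in\Pure_\hatinfo$). Since $\muv$ assigns probability mass $1/T$ to each profile $(\puret[1][t],\ldots,\puret[n][t])$, $t=1,\ldots,T$, linearity of expectation turns the left-hand side of~\eqref{eq:efcce} into
\[
\frac{1}{T}\sum_{t=1}^{T}\mleft(\ut\mleft(\tdev(\puret[i][t]),\,\puret[-i][t]\mright)-\ut\mleft(\puret[i][t],\,\puret[-i][t]\mright)\mright).
\]
The one fact worth checking here is that $\tdev$ maps $\Pure$ into $\Pure$: a deterministic sequence-form strategy that does not reach $\hatinfo$ is returned unchanged, while one that reaches $\hatinfo$ is turned into the strategy obtained by overwriting its behavior on the subtree rooted at $\hatinfo$ with $\hatpure[]\in\Pure_\hatinfo$, which is again a valid deterministic sequence-form strategy. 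Consequently $\tdev(\puret[i][t])\in\Pure$ is a legitimate argument of $\ell^{(i),\,t}$, and by the very definition $\ell^{(i),\,t}:\pure\mapsto\ut(\pure,\puret[-i][t])$ each summand above may be rewritten as $\ell^{(i),\,t}(\tdev(\puret[i][t]))-\ell^{(i),\,t}(\puret[i][t])$.

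It then remains to compare with~\eqref{eq:cum phi regret}: taking $\phi^\ast=\tdev$ inside the maximum that defines $\cumr$ gives
\[
\sum_{t=1}^{T}\mleft(\ell^{(i),\,t}(\tdev(\puret[i][t]))-\ell^{(i),\,t}(\puret[i][t])\mright)\le\cumr,
\]
so the displayed average is at most $\frac{1}{T}\cumr\le\frac{1}{T}\max_{j\in[n]}\cumr[j]=\epsilon$. Since $i\in[n]$ and $\tdev\in\Ph$ were arbitrary, inequality~\eqref{eq:efcce} holds with this value of $\epsilon$, i.e., $\muv$ is an $\epsilon$-EFCCE. The argument is pure bookkeeping; the only substantive point is the stability of $\Pure$ under coarse trigger deviation functions (already needed to make sense of the definition of $\Ph$ and of a coarse trigger regret minimizer), and this is immediate from the definition, so I do not anticipate any genuine obstacle---the real work of the paper lies not in this theorem but in efficiently \emph{achieving} sublinear coarse trigger regret.
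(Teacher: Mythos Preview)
Your proposal is correct and follows essentially the same approach as the paper's own proof: both unfold the empirical frequency to rewrite the EFCCE gap as a time average of per-round deviation gains, then bound that sum by the cumulative $\Ph$-regret using~\eqref{eq:cum phi regret}. The only cosmetic difference is that the paper routes the computation through an explicit indicator-function expansion $\sum_{\pure[]\in\Pi}\bbone[\pure[]=\pure[]^t](\cdots)$ before collapsing it to $T\,\mathbb{E}_{\pure[]\sim\muv}[\cdots]$, whereas you invoke the ``mass $1/T$ on each played profile'' description of $\muv$ directly; your extra remark that $\tdev$ preserves $\Pure$ is a welcome sanity check that the paper leaves implicit.
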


\section{An Efficient No-Coarse-Trigger-Regret Algorithm}\label{sec:alg}

In this section we describe our efficient no-coarse-trigger regret minimizer.
Our approach follows the framework by \citet{Gordon08:No} (see \cref{sec:preliminaries regret minimization}). In order to apply the framework by~\citet{Gordon08:No} we need (i) to provide a regret minimizer for the set of coarse trigger deviation functions (\cref{sec:pt1}), and (ii) to show that for any $\phi\in\Ph$ it is possible to compute in poly-time a sequence-form strategy $\q\in\Q$ such that $\phi(\q)=\q$ (\cref{sec:closed form}). We will provide an efficient closed form solution for the latter problem by exploiting the particular structure of coarse trigger deviation functions.

\subsection{Regret Minimization for the Set $\co\Ph$}\label{sec:pt1}
Fix any player $i$. In this subsection, we discuss how one can construct an efficient regret minimizer for the \emph{convex hull} $\co\Ph$ of the set of coarse trigger deviation functions. Since $\co\Ph \supseteq \Ph$, any such regret minimizer for $\co\Ph$ is in particular a regret minimizer for $\Ph$. 
At a high level, our construction decomposes the problem of minimizing regret on $\Ph$ into $|\Info|$ subproblems, one for each possible trigger information set. Intuititevely, given any $\hat{I}\in\Info$, the subproblem for $\hat{I}$ corresponds to learning a continuation strategies for the trigger $\hat{I}$. In particular, each subproblem is itself a regret minimization problem, on the set of all continuation strategies $\q \in \Q_{\hat I}$, for which efficient regret minimization algorithms are known.

To make the above regret decomposition formal, we operate within the framework of \emph{regret circuits} \citep{Farina19:Regret}. Regret circuits provide ways to decompose the problem of minimizing regret over a composite set
into the problem of minimizing regret over the individual sets. Once regret minimizers for the individual sets have been constructed, an appropriate regret circuit will combine their outputs to guarantee low regret over the original composite set. For our purposes we will only need two regret circuit constructions: one for the convex hull, and one for affine transformations. We recall their main properties next.

\begin{proposition}\label{prop: convex hull}
Let $\cX_1,\ldots,\cX_m$ be a finite collection of sets, $\cR_1,\ldots,\cR_m$ be any regret minimizers for them, and $\cR_{\Delta^m}$ be a regret minimizers for the $m$-simplex. A regret minimizer $\cR_{\co}$ for the set $\co\{\cX_1,\ldots,\cX_m\}$ can be built as follows:
\begin{itemize}
    \item $\cR_{\co}.\textsc{NextElement}$ outputs the element $\lambda_1^tx_1^t+\ldots+\lambda_m^tx_m^t\in\co\{\cX_1,\ldots,\cX_m\}$, where, for each $j\in[m]$, $x_j^t$ is obtained by calling \textsc{NextElement} on $\cR_j$, and the vector $(\lambda_1^t,\ldots,\lambda_m^t)$ is obtained by calling \textsc{NextElement} on $\cR_{\Delta^m}$.
    \item $\cR_{\co}.\textsc{ObserveUtility}(L^t)$ forwards the linear utility function $L^t$ to each $\cR_j$, $j\in[m]$, and then calls $\cR_{\Delta^m}.\textsc{ObserveUtility}$ with the linear utility function $(\lambda_1,\ldots,\lambda_m)\mapsto L^t(x_1^t)\lambda_1+\ldots+L^t(x_m^t)\lambda_m$.
\end{itemize}
Furthermore, the composite regret minimizer $\cR_{\co}$ satisfies the regret bound $R_{\co}^T\le R^T_{\Delta^m}+\max_{j\in [m]} R_j^T$ for all $T$.
\end{proposition}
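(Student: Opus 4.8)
The plan is to unfold the definition of the cumulative regret of $\cR_{\co}$ and peel off the contributions of the simplex regret minimizer $\cR_{\Delta^m}$ and of the component regret minimizers $\cR_1,\dots,\cR_m$ one at a time, using linearity of the utility functions $L^t$ throughout.

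First I would record the elementary fact that $\co\{\cX_1,\dots,\cX_m\}=\{\sum_{j=1}^m\lambda_j x_j:(\lambda_1,\dots,\lambda_m)\in\Delta^m,\ x_j\in\cX_j\}$, so that the comparator in $R_{\co}^T$ may be written $x^\ast=\sum_j\lambda_j x_j$ and the maximum defining $R_{\co}^T$ becomes a maximum over $\lambda\in\Delta^m$ and over $x_1\in\cX_1,\dots,x_m\in\cX_m$. By linearity of each $L^t$, $\sum_{t=1}^T L^t(\sum_j\lambda_j x_j)=\sum_j\lambda_j\sum_{t=1}^T L^t(x_j)$, and since for a fixed $j$ the inner sum over $t$ depends on $x_j$ alone, the joint maximum over $(x_1,\dots,x_m)$ decouples into $\sum_j\lambda_j\max_{x_j\in\cX_j}\sum_t L^t(x_j)$.

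Next I would invoke the regret guarantees of the components. Since $\cR_{\co}.\textsc{ObserveUtility}$ forwards $L^t$ verbatim to each $\cR_j$, by definition of $R_j^T$ we have $\max_{x_j\in\cX_j}\sum_t L^t(x_j)\le\sum_t L^t(x_j^t)+R_j^T$; multiplying by $\lambda_j\ge0$, summing over $j$, and using $\sum_j\lambda_j=1$ gives $\sum_j\lambda_j\max_{x_j}\sum_t L^t(x_j)\le\sum_t\sum_j\lambda_j L^t(x_j^t)+\max_{j\in[m]}R_j^T$. Then I would bring in the simplex minimizer: $\cR_{\Delta^m}$ receives the linear utility $(\lambda_1,\dots,\lambda_m)\mapsto\sum_j L^t(x_j^t)\lambda_j$ and plays $\lambda^t$, so its regret bound yields, for every fixed $\lambda\in\Delta^m$, $\sum_t\sum_j\lambda_j L^t(x_j^t)\le\sum_t\sum_j\lambda_j^t L^t(x_j^t)+R_{\Delta^m}^T$. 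Finally, linearity again gives $\sum_j\lambda_j^t L^t(x_j^t)=L^t(\sum_j\lambda_j^t x_j^t)=L^t(x^t)$, the element actually output by $\cR_{\co}$. Chaining the three inequalities and taking the maximum over $\lambda$ and over $x_1,\dots,x_m$ yields $R_{\co}^T\le R_{\Delta^m}^T+\max_{j\in[m]}R_j^T$, as claimed.

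I do not expect a genuine obstacle here. The only points requiring care are (i) checking that the joint maximum over the component comparators truly decouples, which is exactly where linearity of $L^t$ together with the observation that the $t$-sum for index $j$ involves only $x_j$ is used, and (ii) keeping track of which utility sequence is fed to which minimizer, so that each inequality step is a verbatim application of the corresponding minimizer's regret definition. Everything else is reindexing of finite sums.
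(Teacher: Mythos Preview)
The paper does not actually prove this proposition: \cref{prop: convex hull} is stated without proof as a known regret-circuit construction, with the citation to \citet{Farina19:Regret} serving in lieu of an argument. Your proposal, by contrast, supplies a complete and correct proof. The chain of inequalities you outline---writing the comparator as $\sum_j\lambda_j x_j$, decoupling the joint maximum over $(x_1,\dots,x_m)$ by linearity, applying each $R_j^T$ bound, then applying the $R_{\Delta^m}^T$ bound, and finally collapsing $\sum_j\lambda_j^t L^t(x_j^t)$ back to $L^t(x^t)$---is exactly the standard derivation of this regret-circuit bound, and all steps are valid. The two cautionary points you flag (decoupling of the comparator maximum, and matching each inequality to the correct utility stream) are precisely the places that need care, and you handle both correctly.
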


\begin{proposition}\label{prop: affine}
Let $\cX$ be a convex and compact set, $g:\cX\to\cY$ be an affine map, and $\cR_{\cX}$ any regret minimizer for the set $\cX$. Then, a regret minimizer $\cR_{g(\cX)}$ for the set $g(\cX)$ can be obtained as follows:
\begin{itemize}
    \item $\cR_{g(\cX)}.\textsc{NextElement}$ outputs $g(x^t)$, where $x^t$ is obtained by calling \textsc{NextElement} on $\cR_{\cX}$.
    \item $\cR_{g(\cX)}.\textsc{ObserveUtility}(L^t)$ forwards the linear utility $x\mapsto L^t(g(x))-L^t(g(0))$ to the $\cR_{\cX}$ regret minimizer.
\end{itemize}
Furthermore, the composite regret minimizer $\cR_{g(\cX)}$ satisfies the regret bound $R^T_{g(\cX)} = R^T_{\cX}$ at all times $T$.
\end{proposition}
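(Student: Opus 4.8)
The plan is to verify three things in turn: that the utility functions forwarded to $\cR_{\cX}$ are genuinely \emph{linear} (so the Gordon-style reduction is using $\cR_\cX$ as intended), that $g(\cX)$ is a legitimate decision set, and that the regret telescopes exactly.

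First I would fix a representation of the affine map: write $g(x) = g(0) + \Lambda x$, where $\Lambda$ is a linear map and $g(0)$ is the constant offset (if $0\notin\cX$, simply extend $g$ affinely to the ambient space — nothing in the argument depends on the extension). For a linear utility $L^t$ on $\cY$, the function forwarded to $\cR_\cX$ is $\tilde\ell^t : x \mapsto L^t(g(x)) - L^t(g(0)) = L^t(\Lambda x)$ by linearity of $L^t$; being the composition of the linear maps $\Lambda$ and $L^t$, it is a valid linear utility for $\cR_\cX$. Moreover $g(\cX)$ is the image of a convex compact set under a continuous affine map, hence convex and compact, so $\cR_{g(\cX)}$ operates on a well-defined set.

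Next I would compute the cumulative regret directly from the definition,
\[
R^T_{g(\cX)} = \max_{y^\ast \in g(\cX)} \sum_{t=1}^T \big(L^t(y^\ast) - L^t(g(x^t))\big),
\]
where $x^t$ is the point returned by $\cR_\cX$ at iteration $t$. Every $y^\ast\in g(\cX)$ is $g(x^\ast)$ for some $x^\ast\in\cX$ and conversely, so the outer maximum may instead range over $x^\ast\in\cX$ with $y^\ast$ replaced by $g(x^\ast)$. For each $t$ and each such $x^\ast$,
\[
L^t(g(x^\ast)) - L^t(g(x^t)) = \big(L^t(g(x^\ast)) - L^t(g(0))\big) - \big(L^t(g(x^t)) - L^t(g(0))\big) = \tilde\ell^t(x^\ast) - \tilde\ell^t(x^t),
\]
where $\tilde\ell^t$ is exactly the utility $\cR_\cX$ observed. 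Summing over $t$ and maximizing over $x^\ast\in\cX$ yields $R^T_{g(\cX)} = \max_{x^\ast\in\cX}\sum_{t=1}^T\big(\tilde\ell^t(x^\ast) - \tilde\ell^t(x^t)\big) = R^T_{\cX}$, the claimed identity.

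I do not expect a real obstacle: the proof is a one-line change of variables once the bookkeeping is in place. The only point needing care is the subtraction of the constant $L^t(g(0))$ — it is there precisely to cancel the affine offset and turn $L^t\circ g$ into a \emph{linear} function of $x$ (required for $\cR_\cX$), and this same cancellation is what makes the regret relation an exact equality rather than an equality up to an additive constant.
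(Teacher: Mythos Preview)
Your proof is correct. The paper does not actually supply its own proof of this proposition: it is quoted as a known building block from the regret-circuits framework of \citet{Farina19:Regret} and is invoked without proof (see the proof of \cref{th: regret minimizer}, which simply appeals to \cref{prop: affine}). Your argument is exactly the standard verification one would expect---write $g(x)=g(0)+\Lambda x$, observe that the forwarded utility $\tilde\ell^t = L^t\circ\Lambda$ is linear, and then note that the constant $L^t(g(0))$ cancels in every regret difference so that $R^T_{g(\cX)}=R^T_{\cX}$ exactly.
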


We start from the following observation:
\begin{align*}
    \co\Ph
    &= \co\mleft\{\bar{\Psi}^{(i)}_{\hatinfo}:\hatinfo\in\Info\mright\}, \text{ where}
\\
\bar{\Psi}^{(i)}_{\hatinfo}&\defeq\co\mleft\{\tdev[\hatinfo][\hatpure[]]:\hatpure[]\in\Pure_{\hatinfo}\mright\}.
\end{align*}
Hence, by virtue of the convex hull regret circuit (\cref{prop: convex hull}), in order to construct a regret minimizer for $\co\Ph$ it suffices to construct regret minimizers for each of the ${\bar\Psi}^{(i)}_{\hatinfo}$. Consider now the mapping $g_{\hatinfo}:\bbR^{|\Seqs_{\hatinfo}|}\ni\vec{x}\mapsto \tdev[\hatinfo][\vec{x}]$, which is promptly verified to be affine by definition of coarse trigger deviation function (see~\cref{sec: deviation functions}). Then,
\begin{align*}
{\bar\Psi}^{(i)}_{\hatinfo} = 
g_{\hatinfo}\mleft(\co\Pure_{\hatinfo}\mright)
= 
g_{\hatinfo}\mleft(\Q_{\hatinfo}\mright) \qquad \forall\ \hatinfo\in\Info.
\end{align*}
In other words, the set ${\bar\Psi}^{(i)}_{\hatinfo}$ is the image of $\Q_{\hatinfo}$ under an affine transformation. By using the affine transformation regret circuit (\cref{prop: affine}), a regret minimizer for ${\bar\Psi}^{(i)}_{\hatinfo}$ can be constructed from any regret minimizer for $\Q_{\hatinfo}$. Because efficient regret minimizers for $\Q_{\hatinfo}$ are well known in the literature (e.g., the CFR algorithm by~\citet{zinkevich2008regret}), our approach based on regret circuits yields an efficient regret minimizer for $\co\Ph$, summarized in \cref{alg:regret minimizer}.
\begin{restatable}{theorem}{thRegretMinimizer}
\label{th: regret minimizer}
Let $\cR$ be a regret minimizer instantiated as specified in~\cref{alg:regret minimizer}, and employing the CFR algorithm for each $\cR_{\hatinfo}$ and the regret matching algorithm for $\cR_{\Delta^{|\Info|}}$. After observing a sequence of linear utility functions $L^1,\ldots,L^T:\co\Ph\to\bbR$ with range upper bounded by $U$ (that is, for all $t\in T$ $\max_{\phi,\phi'\in\co\Ph}\{L^t(\phi)-L^t(\phi')\}\le U$), the regret cumulated by the transformations $\phi^1,\ldots,\phi^T\in\co\Ph$ produced by $\cR$ is such that \[
R^T=\max_{\phi^\ast\in\co\Ph}\sum_{t=1}^T\mleft(L^t(\phi^\ast)-L^t(\phi^t)\mright)\le 2 U |\Sigma^{(i)}|\sqrt{T}.
\]
\end{restatable}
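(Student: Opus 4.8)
The plan is to bound $R^T$ by composing the two regret-circuit bounds from Propositions~\ref{prop: convex hull} and~\ref{prop: affine} with the standard $O(\sqrt{T})$ guarantees of CFR and regret matching, and then to control the problem-dependent constants that enter those guarantees in terms of $|\Seqs|$ and the utility range $U$. First I would recall the structure established just before the statement: $\co\Ph = \co\{\bar\Psi^{(i)}_{\hatinfo} : \hatinfo\in\Info\}$, and each $\bar\Psi^{(i)}_{\hatinfo} = g_{\hatinfo}(\Q_{\hatinfo})$ is an affine image of the sequence-form polytope $\Q_{\hatinfo}$. By Proposition~\ref{prop: affine}, the regret minimizer $\cR_{\hatinfo}$ for $\bar\Psi^{(i)}_{\hatinfo}$ inherits exactly the regret of the underlying CFR minimizer $\cR_{\Q_{\hatinfo}}$ on $\Q_{\hatinfo}$ (no loss), and by Proposition~\ref{prop: convex hull} the composite over the $|\Info|$ choices of trigger satisfies $R^T \le R^T_{\Delta^{|\Info|}} + \max_{\hatinfo\in\Info} R^T_{\hatinfo}$. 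So it remains to bound the simplex term and the per-trigger term.

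Next I would invoke the known regret bounds. For regret matching on $\Delta^{|\Info|}$ under utilities whose range (per coordinate spread) is at most $U$, the standard bound is $R^T_{\Delta^{|\Info|}} \le U\sqrt{|\Info|\,T}$ — here I would carefully check that the utilities fed to $\cR_{\Delta^{|\Info|}}$ by the convex-hull circuit, namely $(\lambda_1,\dots,\lambda_m)\mapsto \sum_j L^t(x_j^t)\lambda_j$, have the same range $U$ as the original $L^t$, which follows since each $x_j^t\in\co\Ph$ and $L^t$ has range $\le U$ over $\co\Ph$. For the per-trigger term, CFR on $\Q_{\hatinfo}$ via a sum of local regret-matching minimizers at the information sets in $\Seqs_{\hatinfo}$ gives, after the affine-circuit reduction, a bound of roughly $U\sqrt{|\Info|_{\hatinfo}}\cdot\sqrt{T}$ or $U\,|\Seqs_{\hatinfo}|\sqrt{T}$ depending on how one aggregates the local CFR regrets and normalizes the reach-weighted utilities. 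I would bound $|\Info|$, $|\Info|_{\hatinfo}$, and $|\Seqs_{\hatinfo}|$ all by $|\Seqs[i]|=|\Sigma^{(i)}|$, and then combine: $R^T \le U\sqrt{|\Sigma^{(i)}|\,T} + U\,|\Sigma^{(i)}|\sqrt{T} \le 2U|\Sigma^{(i)}|\sqrt{T}$, which is the claimed bound. (The slightly crude constant $2$ and the use of $|\Sigma^{(i)}|$ rather than its square root in the CFR term is exactly what absorbs the per-information-set CFR accounting.)

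The main obstacle I expect is the bookkeeping in the affine-circuit step: I need the range of the utilities that actually reach each CFR minimizer $\cR_{\Q_{\hatinfo}}$ to still be controlled by $U$. Proposition~\ref{prop: affine} forwards the utility $x\mapsto L^t(g_{\hatinfo}(x)) - L^t(g_{\hatinfo}(0))$; since $g_{\hatinfo}$ maps $\Q_{\hatinfo}$ into $\bar\Psi^{(i)}_{\hatinfo}\subseteq\co\Ph$, the spread of $L^t\circ g_{\hatinfo}$ over $\Q_{\hatinfo}$ is at most the spread of $L^t$ over $\co\Ph$, i.e. $\le U$; the additive constant $L^t(g_{\hatinfo}(0))$ does not affect regret. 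One then has to push this range bound through the CFR decomposition — where the utility at each local information set is reach-weighted, so the local ranges are no larger than $U$ — to conclude each local regret-matching instance contributes at most $U\sqrt{T}$ and there are at most $|\Sigma^{(i)}|$ of them. Threading $U$ correctly through the affine map, the convex-hull circuit, and the CFR tree is the only delicate part; everything else is assembling the inequalities from Propositions~\ref{prop: convex hull} and~\ref{prop: affine} and the textbook bounds for CFR and regret matching.
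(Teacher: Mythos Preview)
Your proposal is correct and follows essentially the same route as the paper: decompose via the convex-hull circuit (Proposition~\ref{prop: convex hull}) to get $R^T \le R^T_{\Delta^{|\Info|}} + \max_{\hatinfo} R^T_{\hatinfo}$, use the affine circuit (Proposition~\ref{prop: affine}) to identify each $R^T_{\hatinfo}$ with the CFR regret on $\Q_{\hatinfo}$, verify that the utility range $U$ is preserved at every stage, and then plug in the standard regret-matching and CFR bounds before crudely bounding $|\Info|$ and $|\Seqs_{\hatinfo}|$ by $|\Sigma^{(i)}|$. The only cosmetic difference is that the paper quotes the regret-matching bound as $U\,|\Info|\sqrt{T}$ rather than your tighter $U\sqrt{|\Info|\,T}$, but either version yields the claimed $2U|\Sigma^{(i)}|\sqrt{T}$ after the final majorization.
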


\begin{algorithm}[tb]
\caption{Regret minimizer for the set $\co\Ph$}
\label{alg:regret minimizer}
\textbf{Data}: player $i\in [n]$, one regret minimizer $\cR_{\hatinfo}$ for the set $\bar{\Psi}^{(i)}_{\hatinfo}$ for each $\hatinfo\in\Info$, one regret minimizer $\cR_{\Delta^{|\Info|}}$ for the $|\Info|$-simplex. 

\hrulefill

\textbf{function} $\textsc{NextElement}()$
\begin{algorithmic}[1] 
\STATE $\lambdav^t\gets \cR_{\Delta^{|\Info|}}.\textsc{NextElement}()$
\STATE \textbf{for} $\hatinfo\in\Info$ \textbf{do} $\tdev[\hatinfo][\q^t_{\hatinfo}]\gets  \cR_{\hatinfo}.\textsc{NextElement}()$
\STATE \textbf{return} $\sum_{\hatinfo\in\Info} \lambdav^t[\hatinfo]\tdev[\hatinfo][\q^t_{\hatinfo}]$
\end{algorithmic}

\hrulefill

\textbf{function} $\textsc{ObserveUtility}(L^t)$
\begin{algorithmic}[1]
    \STATE \textbf{for} $\hatinfo\in\Info$ \textbf{do} $ \cR_{\hatinfo}.\textsc{ObserveUtility}(L^t)$
\STATE $\ell^t_{\lambdav}\gets \Delta^{|\Info|}\ni \lambdav\mapsto \sum_{\hatinfo\in\Info}\lambdav[\hatinfo]L^t(\tdev[\hatinfo][\q^t_{\hatinfo}])$\label{line:loss rm delta}
\STATE $ \cR_{\Delta^{|\Info|}}.\textsc{ObserveUtility}(\ell^t_{\lambdav})$
\end{algorithmic}
\end{algorithm}

\subsection{Closed-Form Fixed Point Computation}\label{sec:closed form}

\begin{algorithm}[tb]
\caption{\textsc{FixedPoint}$(\phi)$}
\label{alg:fixed point}
\textbf{Input}: $\phi=\sum_{\hatinfo\in\Info}\lambdav[\hatinfo]\tdev[\hatinfo][\q_{\hatinfo}]\in\co\Ph$\\
\textbf{Output}: $\q\in\Q$ fixed point of $\phi$
\begin{algorithmic}[1] 
\STATE $\q^\star \gets \vec{0} \in\bbR^{|\Seqs|}_{\ge0}, \quad \q^\star[\emptyseq] \gets 1$\label{line:q star init}
\FOR{$\sigma=(\info,a)\in\Seqs\setminus\{\emptyseq\}$ in top-down ($\prec$) order}\label{line:for loop}
\STATE\label{eq:denominator fixed point} 
$d_{\sigma} \gets \sum_{\substack{\info'\in\Info:\info'\preceq \info}} \lambdav[\info']\label{line:dsigma}
$
\STATE \textbf{if} {$d_{\sigma}=0$} \textbf{then} 
 $\displaystyle\q^\star[\sigma]\gets \frac{\q^\star[\parseq[i][\info]] }{ |\A(I)| }$\label{line: set to uniform}
\STATE \textbf{else} \label{eq: fixed point}$
\displaystyle\q^\star[\sigma]\gets \frac{1}{d_{\sigma}}\mleft(\sum_{\info'\in\Info:\info' \preceq \info}\!\!\!\!\!\lambdav[\info']\,\q_{\info'}[\sigma]\,\q^\star[\parseq[i][\info']]\mright)
$
\ENDFOR
\STATE \textbf{return} $\q^\star$
\end{algorithmic}
\end{algorithm}

\begin{algorithm}[tb]
\caption{Regret minimizer for the set $\co\Ph$}
\label{algo:final}
\textbf{Data}: player $i\in [n]$, $\cR$ regret minimizer for the set $\co\Ph$ defined in \cref{alg:regret minimizer}. 

\hrulefill

\textbf{function} $\textsc{NextStrategy}()$
\begin{algorithmic}[1] 
\STATE $\phi^t=\sum_{\hatinfo\in\Info}\lambdav^t[\hatinfo]\,\tdev[\hatinfo][\q^t_{\hatinfo}] \gets \cR.\textsc{NextElement}()$
\STATE \textbf{return} $x^t \gets \textsc{FixedPoint}(\phi^t)$
\end{algorithmic}

\hrulefill

\textbf{function} $\textsc{ObserveUtility}(\ell^t)$
\begin{algorithmic}[1]
\STATE Define the linear function $L^t: \phi \mapsto \ell^t(\phi(x^t))$
\STATE $\cR.\textsc{ObserveUtility}(L^t)$
\end{algorithmic}
\end{algorithm}

\noindent We complete the construction of our $\Ph$-regret minimizer by showing that, for any player $i\in[n]$ and $\phi\in\co\Ph$, it is possible to compute a sequence-form strategy $\q^\star\in\Q$ such that $\phi(\q^\star)=\q^\star$ in linear time in $|\Seqs| D^{(i)}$, where $D^{(i)}$ denotes the maximum depth of the game tree considering only actions of Player $i$ (i.e., $D^{(i)}=\max_{z\in\Z}|\{\sigma\in\Seqs:\sigma\prec\parseq[i][z]\}|$).
Unlike all known results for EFCE, our result does not rely---as an intermediate step---on the computation the stationary distributions of some stochastic matrices~\cite{farina2021simple,morrill2021efficient}, and does not require to manage any internal-regret minimizer as in~\citet{Celli20:NoRegret}.
Instead, we show that the fixed-point strategy $\q^\star$ can be found in closed form for any $\phi \in \co\Ph$.

Let $\phi=\sum_{\hatinfo\in\Info}\lambdav[\hatinfo]\tdev[\hatinfo][\q_{\hatinfo}]$ be any deviation function returned by \cref{alg:regret minimizer}, where $\lambdav\in\Delta^{|\Info|}$, and $\q_{\hatinfo}\in\Q_{\hatinfo}$ for each $\hatinfo\in\Info$.
%
%
Equipped with this additional notation, \cref{alg:fixed point} describes an efficient procedure to compute a fixed point of a given transformation $\phi\in\co\Ph$. 
The algorithm iterates over sequences of Player $i$ according to their partial ordering $\prec$ (i.e., it is never the case that a sequence $\sigma=(I,a)$ is considered before $\parseq[i][I]$). For each sequence $\sigma=(I,a)$, the algorithm computes $d_{\sigma}\in\bbR_{\ge0}$ as the sum of the weights of the convex combination corresponding to information sets preceding $I$ (Line~\ref{eq:denominator fixed point}). If $d_{\sigma}=0$, then the matrix $M$ corresponding to the transformation $\phi$ must be such that $M[\sigma,\sigma]=1$, and $M[\sigma,\sigma']=0$ for all $\sigma'\ne\sigma$. Therefore, the choice we make at $\sigma$ is indifferent as long as the resulting $\q^\star$ is a well-formed sequence-form strategy. We set $\q^\star$ so that the probability-mass flow is evenly divided among sequences originating in $I$ (Line~\ref{line: set to uniform}).
Finally, when $d_{\sigma}\ne 0$, Line~\ref{eq: fixed point} assigns to $\q^\star[\sigma]$ a value equal to a weighted sum of $\q_{\info'}[\sigma]\q^\star[\sigma']$ for sequences $\sigma'=(I',a')$ preceding information set $I$.
The following theorem shows that ~\cref{alg:fixed point} is indeed correct and behaves as desired. 
\begin{restatable}{theorem}{closedForm}
\label{thm:closedForm}
For any player $i\in[n]$, and transformation $\phi=\sum_{\hatinfo\in\Info}\lambdav[\hatinfo]\tdev[\hatinfo][\q_{\hatinfo}]\in\co\Ph$, the vector $\q^\star\in\bbR^{|\Seqs|}$ obtained through~\cref{alg:fixed point} is such that $\q^\star\in\Q$, and $\phi(\q^\star)=\q^\star$.~\cref{alg:fixed point} runs in linear time in $|\Seqs|D^{(i)}$.
\end{restatable}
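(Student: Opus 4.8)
The plan is to argue directly about the matrix representation of $\phi$. Writing $\Mdev[\hatinfo][\q_{\hatinfo}]$ for the matrix of $\tdev[\hatinfo][\q_{\hatinfo}]$ and $M\defeq\sum_{\hatinfo\in\Info}\lambdav[\hatinfo]\,\Mdev[\hatinfo][\q_{\hatinfo}]$ for the matrix of $\phi$, we have $\phi(\q)=M\q$ for all $\q\in\bbR^{|\Seqs|}$, so $\q^\star$ is a fixed point of $\phi$ iff $M\q^\star=\q^\star$. The first step is to read off the rows of $M$ from the definition of $\Mdev[\hatinfo][\q_{\hatinfo}]$: its $\emptyseq$-row is the $\emptyseq$-th standard basis vector, and for $\sigma=(\info,a)$ the only potentially nonzero entry in row $\sigma$ is a $1$ in column $\sigma$ when $\hatinfo\not\preceq\info$ (equivalently $\sigma\not\succeq\hatinfo$), and $\q_{\hatinfo}[\sigma]$ in column $\parseq[i][\hatinfo]$ when $\hatinfo\preceq\info$ (equivalently $\sigma\succeq\hatinfo$). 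Taking the $\lambdav$-weighted combination and using $\lambdav\in\Delta^{|\Info|}$ yields $(M\q)[\emptyseq]=\q[\emptyseq]$ and, writing $d_\sigma\defeq\sum_{\hatinfo\in\Info:\hatinfo\preceq\info}\lambdav[\hatinfo]$ for the quantity computed on Line~\ref{line:dsigma},
\[
(M\q)[\sigma]=(1-d_\sigma)\,\q[\sigma]+\sum_{\hatinfo\in\Info:\,\hatinfo\preceq\info}\lambdav[\hatinfo]\,\q_{\hatinfo}[\sigma]\,\q[\parseq[i][\hatinfo]].
\]
Consequently $M\q=\q$ is equivalent to $\q[\emptyseq]=1$ together with $d_\sigma\,\q[\sigma]=\sum_{\hatinfo\preceq\info}\lambdav[\hatinfo]\,\q_{\hatinfo}[\sigma]\,\q[\parseq[i][\hatinfo]]$ for every $\sigma=(\info,a)\in\Seqs\setminus\{\emptyseq\}$.

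Second, I would check that \cref{alg:fixed point} outputs a vector satisfying exactly this system. The loop processes sequences in $\prec$-order, and by perfect recall every sequence appearing on the right-hand side above — namely $\parseq[i][\hatinfo]$ for $\hatinfo\preceq\info$, and $\parseq[i][\info]$ in the uniform branch — strictly precedes $\sigma=(\info,a)$, so those entries of $\q^\star$ are already finalized (and never later overwritten) when $\sigma$ is processed. When $d_\sigma\neq 0$, Line~\ref{eq: fixed point} sets $\q^\star[\sigma]$ to be exactly $\tfrac{1}{d_\sigma}\sum_{\hatinfo\preceq\info}\lambdav[\hatinfo]\q_{\hatinfo}[\sigma]\q^\star[\parseq[i][\hatinfo]]$, so the equation holds at $\sigma$. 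When $d_\sigma=0$, nonnegativity of $\lambdav$ forces $\lambdav[\hatinfo]=0$ for every $\hatinfo\preceq\info$, so both sides of the equation vanish irrespective of the value set on Line~\ref{line: set to uniform}, and the equation again holds at $\sigma$. Together with $\q^\star[\emptyseq]=1$, this yields $M\q^\star=\q^\star$, i.e., $\phi(\q^\star)=\q^\star$.

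Third, I would verify $\q^\star\in\Q$ using \cref{def:seq_polytope}. Nonnegativity is an immediate induction along $\prec$: both branches produce a nonnegative value out of nonnegative already-computed entries (on Line~\ref{eq: fixed point} using $\lambdav\ge 0$, $\q_{\hatinfo}\ge 0$, and $d_\sigma>0$). The mass-conservation constraints $\sum_{a\in\A(\info)}\q^\star[(\info,a)]=\q^\star[\parseq[i][\info]]$ are the crux. Fix $\info\in\Info$, write $\sigma'\defeq\parseq[i][\info]$, and note that $d_{(\info,a)}$ depends only on $\info$ (call it $d_\info$). If $d_\info=0$ the constraint is immediate from Line~\ref{line: set to uniform}. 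Otherwise, summing Line~\ref{eq: fixed point} over $a\in\A(\info)$ and applying the defining constraints of $\Q_{\hatinfo}$ — $\sum_{a}\q_{\hatinfo}[(\info,a)]=\q_{\hatinfo}[\sigma']$ for $\hatinfo\prec\info$ and $\sum_{a}\q_{\info}[(\info,a)]=1$ — reduces the sum to $\tfrac{1}{d_\info}\big(\lambdav[\info]\,\q^\star[\sigma']+\sum_{\hatinfo\prec\info}\lambdav[\hatinfo]\,\q^\star[\parseq[i][\hatinfo]]\,\q_{\hatinfo}[\sigma']\big)$. Here I would use the structural fact that, by perfect recall, the strict predecessors of $\info$ are exactly $\info''$ together with the predecessors of $\info''$, where $\sigma'=(\info'',a'')$; this gives the telescoping identity $d_\info=\lambdav[\info]+d_{\sigma'}$ and identifies $\sum_{\hatinfo\prec\info}\lambdav[\hatinfo]\q^\star[\parseq[i][\hatinfo]]\q_{\hatinfo}[\sigma']$ with the right-hand side of the already-established fixed-point equation at $\sigma'$, which equals $d_{\sigma'}\,\q^\star[\sigma']$. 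Substituting collapses the expression to $\q^\star[\sigma']$, as desired; the boundary case $\sigma'=\emptyseq$ ($\info$ a root information set, so no strict predecessors) is immediate since then $d_\info=\lambdav[\info]$ and $\q^\star[\emptyseq]=1$. Together with $\q^\star[\emptyseq]=1$ and nonnegativity this gives $\q^\star\in\Q$.

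Finally, for the running time: the loop touches each sequence once, and for $\sigma=(\info,a)$ both $d_\sigma$ and the weighted sum range over the chain $\{\hatinfo:\hatinfo\preceq\info\}$, which has length at most $D^{(i)}$ by perfect recall; since $\sum_{\info\in\Info}|\A(\info)|<|\Seqs|$, the total work is $O(|\Seqs|\,D^{(i)})$. I expect the main obstacle to be the mass-conservation step of the third paragraph — specifically, spotting the telescoping relation $d_\info=\lambdav[\info]+d_{\parseq[i][\info]}$ and that the action-sum of the fixed-point update at $\info$, after invoking the $\Q_{\hatinfo}$ constraints, is exactly the fixed-point update one sequence higher in the tree. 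The row-wise matrix computation of the first paragraph requires a careful case analysis but is otherwise mechanical.
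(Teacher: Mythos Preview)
Your proposal is correct and follows essentially the same argument as the paper. Your first-paragraph row computation is exactly the paper's \cref{lemma: phi decomposition}, and the telescoping identity $d_\info=\lambdav[\info]+d_{\parseq[i][\info]}$ together with the $\Q_{\hatinfo}$ constraints is precisely what the paper uses for mass conservation. The only difference is ordering: the paper proves $\q^\star\in\Q$ first and then the fixed-point property, whereas you establish the fixed-point equation first and then invoke it at $\sigma'=\parseq[i][\info]$ during the mass-conservation step; this lets you avoid the paper's explicit case split on whether $d_{\parseq[i][\info]}=0$, since you already know $d_{\sigma'}\,\q^\star[\sigma']=\sum_{\hatinfo\prec\info}\lambdav[\hatinfo]\,\q_{\hatinfo}[\sigma']\,\q^\star[\parseq[i][\hatinfo]]$ holds in both cases.
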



As a direct consequence of the correctness (\cref{th: regret minimizer} and \cref{thm:closedForm}) of the two steps required by \citet{Gordon08:No}'s construction (recalled in \cref{sec:preliminaries regret minimization}) we have the following.
\begin{restatable}{corollary}{corollaryEfcce}
\cref{algo:final} is a $\co\Ph$-regret minimizer for the set $\Q$. Thus, when all player play according to \cref{algo:final} where at all $t$ the utility $\ell^t$ of each player is set to their own linear utility function given the opponents' actions, the empirical frequency of play in the game after $T$ iterations converges to a $O(1/\sqrt{T})$-EFCCE with high probability, and converges almost surely to an EFCCE in the limit.  
\end{restatable}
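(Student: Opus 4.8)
The statement bundles two claims, and the plan is to establish them in order. \emph{Claim 1: \cref{algo:final} is a $\co\Ph$-regret minimizer for $\Q$.} This is a direct instantiation of the template of \citet{Gordon08:No} recalled in \cref{sec:preliminaries regret minimization}, applied with decision set $\cX=\Q$ and transformation set $\Phi=\co\Ph$ — note that every element of $\co\Ph$ maps $\Q$ into $\Q$, since each $\tdev[\hatinfo][\hatpure[]]$ maps $\Pure$ into $\Pure$ by \cref{def: coarse trigger dev func}, hence maps $\Q=\co\Pure$ into itself, and convex combinations do too. The template needs two ingredients, both already in place: a deterministic external-regret minimizer for $\co\Ph$ with sublinear regret, namely $\cR$ of \cref{alg:regret minimizer}, whose regret against any sequence of linear utilities of bounded range is $O(|\Seqs|\sqrt T)$ by \cref{th: regret minimizer}; and a way to compute, for every $\phi\in\co\Ph$, a fixed point $\phi(\q^\star)=\q^\star\in\Q$, which is exactly what \textsc{FixedPoint} of \cref{alg:fixed point} does, as certified by \cref{thm:closedForm}. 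Since \cref{algo:final} implements precisely Gordon's two operations, the $\co\Ph$-regret of its outputs $x^1,\dots,x^T\in\Q$ is bounded by the external regret of $\cR$, i.e.\ $O(|\Seqs|\sqrt T)$ pathwise, whenever the observed utilities have bounded range. As $\co\Ph\supseteq\Ph$, \cref{algo:final} is in particular a coarse trigger regret minimizer for $\Pure$.

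\emph{Claim 2: convergence of the empirical frequency of play.} Each player $i$ runs an independent copy of \cref{algo:final}; at iteration $t$ it outputs $x^t_i\in\Q$, draws a deterministic strategy $\puret[i][t]\sim x^t_i$, plays it, and observes the realized linear utility $\ell^{(i),t}:\pure\mapsto\ut(\pure,\puret[-i][t])$, which has range at most the (constant) payoff range of the game. Using that $x^t_i$ is the fixed point of the transformation output by $\cR$ at step $t$, Claim~1 gives, pathwise and for every player $i$,
\[
\max_{\phi^\ast\in\co\Ph}\sum_{t=1}^T\mleft(\ell^{(i),t}(\phi^\ast(x^t_i))-\ell^{(i),t}(x^t_i)\mright)=O\mleft(|\Seqs|\sqrt T\mright).
\]
What \cref{thm:euclid} needs, however, is a bound on the \emph{realized} coarse trigger regret $\cumr=\max_{\phi^\ast\in\Ph}\sum_t\mleft(\ell^{(i),t}(\phi^\ast(\puret[i][t]))-\ell^{(i),t}(\puret[i][t])\mright)$, a maximum over the finite set $\Ph$. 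Fixing $\phi^\ast\in\Ph$, a short computation exploiting the linearity of $\phi^\ast$ and of $\ell^{(i),t}$ together with the fixed-point identity shows that the per-round difference between the realized summand $\ell^{(i),t}(\phi^\ast(\puret[i][t]))-\ell^{(i),t}(\puret[i][t])$ and its mixed counterpart $\ell^{(i),t}(\phi^\ast(x^t_i))-\ell^{(i),t}(x^t_i)$ has zero conditional mean given the history of play; it therefore forms a bounded martingale difference sequence. By Azuma--Hoeffding its partial sums are $O(\sqrt{T\log(1/\delta)})$ with probability at least $1-\delta$, so a union bound over $\Ph$ (whose cardinality's logarithm is polynomial in the game size) and over the $n$ players, combined with the bound above, yields $\max_{i\in[n]}\cumr=O(|\Seqs|\sqrt T+\sqrt{T\log(1/\delta)})$ with probability at least $1-\delta$. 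Plugging this into \cref{thm:euclid} shows that the empirical frequency of play $\muv$ is an $O(\sqrt{\log(1/\delta)/T})$-EFCCE with probability at least $1-\delta$, establishing the high-probability statement. For the almost-sure statement, I would take $\delta=\delta_T$ with $\sum_T\delta_T<\infty$ (e.g.\ $\delta_T=T^{-2}$), so that the corresponding accuracies $\epsilon_T$ tend to $0$; by the Borel--Cantelli lemma, almost surely $\muv$ is an $\epsilon_T$-EFCCE for all sufficiently large $T$, and since the set of EFCCE is closed, every limit point of the empirical frequencies is an EFCCE almost surely.

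\textbf{Main obstacle.} The only non-routine step is the passage in Claim~2 from the \emph{deterministic} regret guarantee that Gordon's template delivers on the mixed iterates $x^t_i\in\Q$ to the \emph{realized} coarse trigger regret on the sampled deterministic iterates $\puret[i][t]\in\Pure$ that \cref{thm:euclid} is stated in terms of: this requires identifying the right martingale — the fixed-point property of $x^t_i$ and the multilinearity of the sequence-form payoff are what make the conditional mean of the per-round gap vanish — then invoking Azuma--Hoeffding with a union bound over the finitely many coarse trigger deviations, and finally a Borel--Cantelli argument for the limiting statement. All other ingredients — correctness of Gordon's reduction, the $O(|\Seqs|\sqrt T)$ regret of $\cR$, the closed-form fixed point, and the reduction from coarse trigger regret to $\epsilon$-EFCCE — are supplied verbatim by \citet{Gordon08:No}, \cref{th: regret minimizer}, \cref{thm:closedForm}, and \cref{thm:euclid} respectively.
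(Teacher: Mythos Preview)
Your proposal is correct and follows essentially the same approach as the paper: you instantiate \citet{Gordon08:No}'s template using \cref{th: regret minimizer} and \cref{thm:closedForm}, pass from mixed to realized regret via Azuma--Hoeffding, invoke \cref{thm:euclid}, and conclude almost-sure convergence via Borel--Cantelli. If anything, you are more careful than the paper's own proof in making the martingale structure and the union bound over $\Ph$ explicit.
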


\section{Experimental Evaluation}

We experimentally investigate the convergence of our no-regret learning dynamics for EFCCE on four
standard benchmark games: 3-player Kuhn poker, 3-player Leduc poker, 3-player Goofspiel, 2-player (general-sum) Battleship. See \cref{sec:exp details} for a description of each game. 
For each game we investigate the rate of convergence to EFCCE, measured as the maximum expected increase in utility that any player can obtain by optimally responding to the mediator that recommends strategies based on the correlated distribution of play induced by the learning dynamics, of our EFCCE learning dynamics and of the very related EFCE learning dynamics in~\citet{farina2021simple}, which were obtained using the same framework as this paper. 
Experimental results are available in \cref{fig:experiments}. In each game, we ran our EFCCE dynamics and the EFCE dynamics for the same number of iterations (1000 iterations in Leduc poker and Goofspiel, 5000 in Kuhn poker, 10000 in Battleship). Each fixed-point computation in the EFCE dynamics was performed through an optimized implementation of the power iteration method. The power iteration was interrupted when the Euclidean norm of the fixed-point residual got below the threshold value of $10^{-6}$ (this typically was achieved in less than 10 iterations of the method).

In all four games, we observe that our EFCCE dynamics outperform the EFCE dynamics, often by a significant margin. This is consistent with intuition, as the EFCE dynamics are solving a strictly harder problem (minimizing the EFCE gap, instead of the EFCCE gap). In Goofspiel, the EFCCE dynamics 
induce a correlated distribution of play that is an exact EFCCE after roughly 500 iterations.

In Kuhn, Leduc, and Goofspiel the runtime of each iteration is comparable between EFCE and EFCCE dynamics, while in the Battleship game, the EFCCE dyanmics are roughly 30\% faster per iteration. This is consistent with the observation that the amout of work necessary to find a stationary distribution grows approximately cubically with the maximum number of actions at any decision point in the game, which is higher in Battleship compared to the other games.

\begin{figure}
    \centering
    \includegraphics[scale=.66]{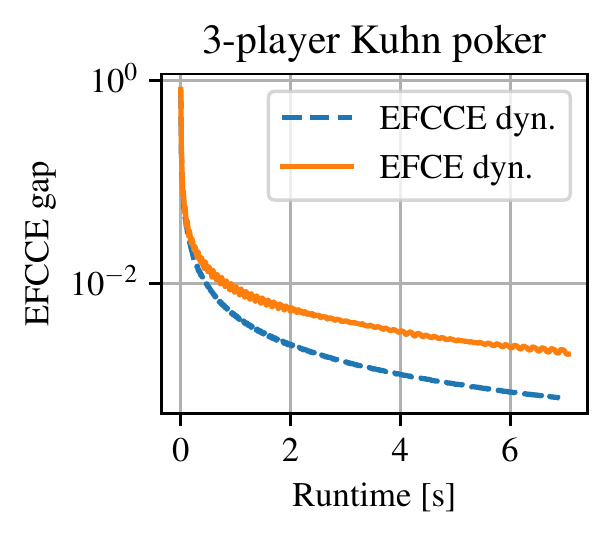}
    \includegraphics[scale=.66]{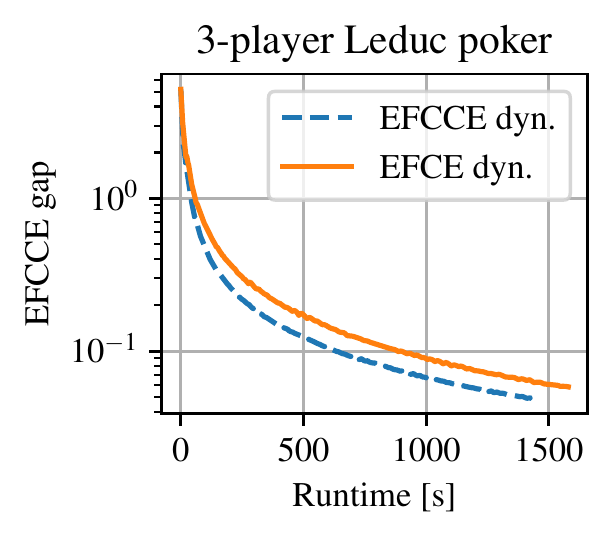}
    \includegraphics[scale=.66]{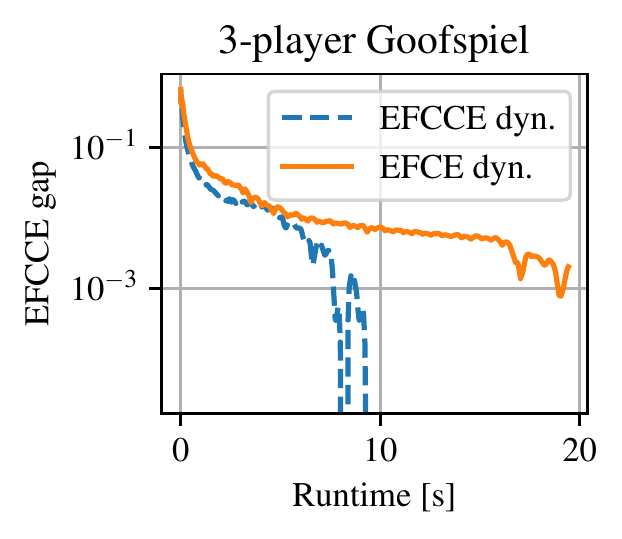}
    \includegraphics[scale=.66]{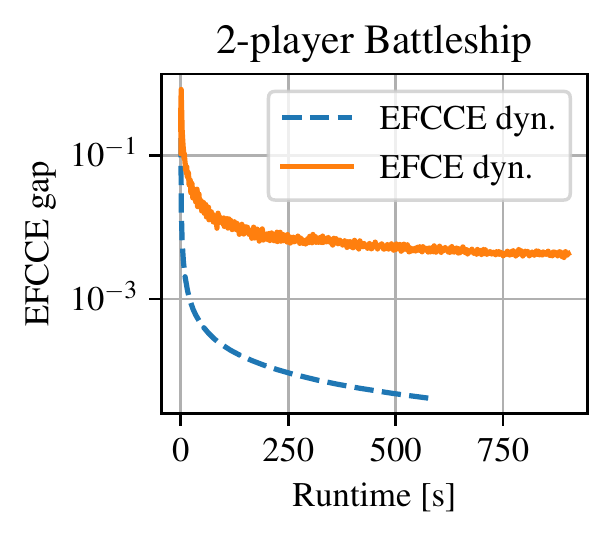}
    \caption{Rate of convergence to EFCCE (measured through the EFCCE gap) of the EFCCE learning dynamics introduced in this paper and of the related EFCE learning dynamics by \citet{farina2021simple} on four standard benchmark games. The y-axis is on a log scale.}
    \label{fig:experiments}\vspace{-0mm}
\end{figure}

\section{Conclusions and Future Research}

We showed that---at least when analyzed through the phi-regret minimization framework---the computation of the fixed point required at each iteration in EFCCE learning dynamics can be carried out in closed form using a simple formula that avoids the computation of stationary distributions of Markov chains. This is contrast to all known learning dynamics for EFCE.

We conjecture that a similar result could be derived within the EFR framework, when blind causal deviations are considered, though we leave exploration of that direction open.

\clearpage
\section*{Acknowledgments}
This material is based on work supported by the National
Science Foundation under grants IIS-1718457, IIS-1901403, and CCF-1733556, and the ARO under award W911NF2010081. Gabriele Farina
is supported by a Facebook fellowship.

\bibliography{aaai22}

\begin{thebibliography}{41}
\providecommand{\natexlab}[1]{#1}

\bibitem[{Aumann(1974)}]{aumann1974subjectivity}
Aumann, R.~J. 1974.
\newblock Subjectivity and correlation in randomized strategies.
\newblock \emph{Journal of mathematical Economics}, 1(1): 67--96.

\bibitem[{Blum and Mansour(2007)}]{blum2007external}
Blum, A.; and Mansour, Y. 2007.
\newblock From external to internal regret.
\newblock \emph{Journal of Machine Learning Research}, 8(Jun): 1307--1324.

\bibitem[{Brown and Sandholm(2019)}]{brown2019solving}
Brown, N.; and Sandholm, T. 2019.
\newblock Solving imperfect-information games via discounted regret
  minimization.
\newblock In \emph{Proceedings of the AAAI Conference on Artificial
  Intelligence}, volume~33, 1829--1836.

\bibitem[{Celli, Coniglio, and Gatti(2019)}]{celli2018computing}
Celli, A.; Coniglio, S.; and Gatti, N. 2019.
\newblock Computing Optimal Ex Ante Correlated Equilibria in Two-Player
  Sequential Games.
\newblock In \emph{Proceedings of the 18th International Conference on
  Autonomous Agents and MultiAgent Systems (AAMAS)}, 909--917.

\bibitem[{Celli and Gatti(2018)}]{celli2018}
Celli, A.; and Gatti, N. 2018.
\newblock Computational Results for Extensive-Form Adversarial Team Games.
\newblock In \emph{AAAI Conference on Artificial Intelligence (AAAI)}.

\bibitem[{Celli et~al.(2019)Celli, Marchesi, Bianchi, and
  Gatti}]{celli2019learning}
Celli, A.; Marchesi, A.; Bianchi, T.; and Gatti, N. 2019.
\newblock Learning to Correlate in Multi-Player General-Sum Sequential Games.
\newblock In \emph{Advances in Neural Information Processing Systems},
  13055--13065.

\bibitem[{Celli et~al.(2020)Celli, Marchesi, Farina, and
  Gatti}]{Celli20:NoRegret}
Celli, A.; Marchesi, A.; Farina, G.; and Gatti, N. 2020.
\newblock No-regret learning dynamics for extensive-form correlated
  equilibrium.
\newblock In \emph{Advances in Neural Information Processing Systems},
  volume~33.

\bibitem[{Cesa-Bianchi and Lugosi(2006)}]{Cesa-Bianchi06:Prediction}
Cesa-Bianchi, N.; and Lugosi, G. 2006.
\newblock \emph{Prediction, learning, and games}.
\newblock Cambridge University Press.

\bibitem[{Chen and Deng(2006)}]{chen2006settling}
Chen, X.; and Deng, X. 2006.
\newblock Settling the complexity of two-player Nash equilibrium.
\newblock In \emph{2006 47th Annual IEEE Symposium on Foundations of Computer
  Science}, 261--272. IEEE.

\bibitem[{Daskalakis, Goldberg, and
  Papadimitriou(2009)}]{daskalakis2009complexity}
Daskalakis, C.; Goldberg, P.~W.; and Papadimitriou, C.~H. 2009.
\newblock The complexity of computing a Nash equilibrium.
\newblock \emph{SIAM Journal on Computing}, 39(1): 195--259.

\bibitem[{Farina, Bianchi, and Sandholm(2020)}]{farina2019coarse}
Farina, G.; Bianchi, T.; and Sandholm, T. 2020.
\newblock Coarse Correlation in Extensive-Form Games.
\newblock In \emph{The Thirty-Fourth AAAI Conference on Artificial Intelligence
  (AAAI-20)}.

\bibitem[{Farina et~al.(2018)Farina, Celli, Gatti, and
  Sandholm}]{farina2018exante}
Farina, G.; Celli, A.; Gatti, N.; and Sandholm, T. 2018.
\newblock Ex ante coordination and collusion in zero-sum multi-player
  extensive-form games.
\newblock In \emph{Advances in Neural Information Processing Systems
  (NeurIPS)}.

\bibitem[{Farina et~al.(2021)Farina, Celli, Marchesi, and
  Gatti}]{farina2021simple}
Farina, G.; Celli, A.; Marchesi, A.; and Gatti, N. 2021.
\newblock Simple Uncoupled No-Regret Learning Dynamics for Extensive-Form
  Correlated Equilibrium.
\newblock \emph{arXiv preprint arXiv:2104.01520}.

\bibitem[{Farina, Kroer, and Sandholm(2019)}]{Farina19:Regret}
Farina, G.; Kroer, C.; and Sandholm, T. 2019.
\newblock Regret Circuits: Composability of Regret Minimizers.
\newblock In \emph{Proceedings of the International Conference on Machine
  Learning}, 1863--1872.

\bibitem[{Farina et~al.(2019)Farina, Ling, Fang, and
  Sandholm}]{farina2019correlation}
Farina, G.; Ling, C.~K.; Fang, F.; and Sandholm, T. 2019.
\newblock Correlation in extensive-form games: Saddle-point formulation and
  benchmarks.
\newblock In \emph{Advances in Neural Information Processing Systems},
  9229--9239.

\bibitem[{Foster and Vohra(1997)}]{foster1997calibrated}
Foster, D.~P.; and Vohra, R.~V. 1997.
\newblock Calibrated learning and correlated equilibrium.
\newblock \emph{Games and Economic Behavior}, 21(1-2): 40.

\bibitem[{Fudenberg and Levine(1995)}]{fudenberg1995consistency}
Fudenberg, D.; and Levine, D.~K. 1995.
\newblock Consistency and cautious fictitious play.
\newblock \emph{Journal of Economic Dynamics and Control}, 19(5-7): 1065--1089.

\bibitem[{Fudenberg and Levine(1999)}]{fudenberg1999conditional}
Fudenberg, D.; and Levine, D.~K. 1999.
\newblock Conditional universal consistency.
\newblock \emph{Games and Economic Behavior}, 29(1-2): 104--130.

\bibitem[{Gordon, Greenwald, and Marks(2008)}]{Gordon08:No}
Gordon, G.~J.; Greenwald, A.; and Marks, C. 2008.
\newblock No-regret learning in convex games.
\newblock In \emph{International Conference on Machine learning}, 360--367.

\bibitem[{Greenwald and Jafari(2003)}]{Greenwald03:General}
Greenwald, A.; and Jafari, A. 2003.
\newblock A general class of no-regret learning algorithms and game-theoretic
  equilibria.
\newblock In \emph{Learning theory and kernel machines}, 2--12. Springer.

\bibitem[{Guennebaud, Jacob et~al.(2010)}]{eigenweb}
Guennebaud, G.; Jacob, B.; et~al. 2010.
\newblock Eigen v3.
\newblock http://eigen.tuxfamily.org.

\bibitem[{Hart and Mas-Colell(2000)}]{hart2000simple}
Hart, S.; and Mas-Colell, A. 2000.
\newblock A simple adaptive procedure leading to correlated equilibrium.
\newblock \emph{Econometrica}, 68(5): 1127--1150.

\bibitem[{Hart and Mas-Colell(2001)}]{hart2001general}
Hart, S.; and Mas-Colell, A. 2001.
\newblock A general class of adaptive strategies.
\newblock \emph{Journal of Economic Theory}, 98(1): 26--54.

\bibitem[{Hart and Mas-Colell(2003)}]{Hart03:Uncoupled}
Hart, S.; and Mas-Colell, A. 2003.
\newblock Uncoupled Dynamics Do Not Lead to Nash Equilibrium.
\newblock \emph{The American Economic Review}, 93(5): 1830--1836.

\bibitem[{Koller, Megiddo, and {von Stengel}(1996)}]{Koller96:Efficient}
Koller, D.; Megiddo, N.; and {von Stengel}, B. 1996.
\newblock Efficient Computation of Equilibria for Extensive Two-Person Games.
\newblock \emph{Games and Economic Behavior}, 14(2): 247--259.

\bibitem[{Koutsoupias and Papadimitriou(1999)}]{koutsoupias1999worst}
Koutsoupias, E.; and Papadimitriou, C. 1999.
\newblock Worst-case equilibria.
\newblock In \emph{Annual Symposium on Theoretical Aspects of Computer
  Science}, 404--413. Springer.

\bibitem[{Kuhn(1950)}]{kuhn1950simplified}
Kuhn, H.~W. 1950.
\newblock A simplified two-person poker.
\newblock \emph{Contributions to the Theory of Games}, 1: 97--103.

\bibitem[{Kuhn(1953)}]{kuhn1953}
Kuhn, H.~W. 1953.
\newblock \emph{Extensive Games and the Problem of Information}, 193--216.
\newblock Princeton University Press.

\bibitem[{Lanctot et~al.(2009)Lanctot, Waugh, Zinkevich, and
  Bowling}]{lanctot2009monte}
Lanctot, M.; Waugh, K.; Zinkevich, M.; and Bowling, M.~H. 2009.
\newblock Monte Carlo Sampling for Regret Minimization in Extensive Games.
\newblock In \emph{Advances in Neural Information Processing Systems},
  1078--1086.

\bibitem[{Morrill et~al.(2021)Morrill, D'Orazio, Lanctot, Wright, Bowling, and
  Greenwald}]{morrill2021efficient}
Morrill, D.; D'Orazio, R.; Lanctot, M.; Wright, J.~R.; Bowling, M.; and
  Greenwald, A. 2021.
\newblock Efficient Deviation Types and Learning for Hindsight Rationality in
  Extensive-Form Games.
\newblock \emph{arXiv preprint arXiv:2102.06973}.

\bibitem[{Morrill et~al.(2020)Morrill, D'Orazio, Sarfati, Lanctot, Wright,
  Greenwald, and Bowling}]{morrill2020hindsight}
Morrill, D.; D'Orazio, R.; Sarfati, R.; Lanctot, M.; Wright, J.; Greenwald, A.;
  and Bowling, M. 2020.
\newblock Hindsight and Sequential Rationality of Correlated Play.
\newblock In \emph{The Thirty-Fourth AAAI Conference on Artificial Intelligence
  (AAAI-20)}.

\bibitem[{Nash(1950)}]{nash1950equilibrium}
Nash, J.~F. 1950.
\newblock Equilibrium points in n-person games.
\newblock \emph{Proceedings of the national academy of sciences}, 36(1):
  48--49.

\bibitem[{Romanovskii(1962)}]{Romanovskii62:Reduction}
Romanovskii, I. 1962.
\newblock Reduction of a Game with Complete Memory to a Matrix Game.
\newblock \emph{Soviet Mathematics}, 3.

\bibitem[{Ross(1971)}]{ross1971goofspiel}
Ross, S.~M. 1971.
\newblock Goofspiel—the game of pure strategy.
\newblock \emph{Journal of Applied Probability}, 8(3): 621--625.

\bibitem[{Roughgarden and Tardos(2002)}]{roughgarden2002bad}
Roughgarden, T.; and Tardos, {\'E}. 2002.
\newblock How bad is selfish routing?
\newblock \emph{Journal of the ACM}, 49(2): 236--259.

\bibitem[{Southey et~al.(2005)Southey, Bowling, Larson, Piccione, Burch,
  Billings, and Rayner}]{southey2005bayes}
Southey, F.; Bowling, M.~H.; Larson, B.; Piccione, C.; Burch, N.; Billings, D.;
  and Rayner, D.~C. 2005.
\newblock Bayes' Bluff: Opponent Modelling in Poker.
\newblock In \emph{Conference on Uncertainty in Artificial Intelligence (UAI)},
  550--558.

\bibitem[{Stoltz and Lugosi(2007)}]{Stoltz07:Learning}
Stoltz, G.; and Lugosi, G. 2007.
\newblock Learning correlated equilibria in games with compact sets of
  strategies.
\newblock \emph{Games and Economic Behavior}, 59(1): 187--208.

\bibitem[{Tammelin et~al.(2015)Tammelin, Burch, Johanson, and
  Bowling}]{tammelin2015solving}
Tammelin, O.; Burch, N.; Johanson, M.; and Bowling, M. 2015.
\newblock Solving Heads-Up Limit Texas Hold'em.
\newblock In \emph{International Joint Conferences on Artificial Intelligence},
  645--652.

\bibitem[{{von Stengel}(1996)}]{Stengel96:Efficient}
{von Stengel}, B. 1996.
\newblock Efficient Computation of Behavior Strategies.
\newblock \emph{Games and Economic Behavior}, 14(2): 220--246.

\bibitem[{von Stengel and Forges(2008)}]{von2008extensive}
von Stengel, B.; and Forges, F. 2008.
\newblock Extensive-form correlated equilibrium: Definition and computational
  complexity.
\newblock \emph{Mathematics of Operations Research}, 33(4): 1002--1022.

\bibitem[{Zinkevich et~al.(2008)Zinkevich, Johanson, Bowling, and
  Piccione}]{zinkevich2008regret}
Zinkevich, M.; Johanson, M.; Bowling, M.; and Piccione, C. 2008.
\newblock Regret minimization in games with incomplete information.
\newblock In \emph{Advances in Neural Information Processing Systems},
  1729--1736.

\end{thebibliography}

\iftrue
\clearpage
\appendix

\onecolumn
%
\section{Proofs}\label{sec:appendix omitted proofs}

\subsection{Proof for \cref{sec:efcce}}

\efcceRegret*
\begin{proof}
    For any $i\in [n]$, we have $\cumr\le\epsilon T$. Then, by definition of $\cumr$ as per Equation~\eqref{eq:cum phi regret}, for any coarse trigger deviation function $\phi\in\Ph$ is must hold
    \begin{align*}
    T\epsilon & \ge \sum_{t=1}^T\mleft(\ell^{(i),\, t}\mleft(\phi(\puret)\mright)-\ell^{(i),\, t}\mleft(\puret\mright)\mright)
    =\sum_{t=1}^T\mleft(\ut\mleft(\phi(\puret),\puret[-i][t]\mright) - \ut\mleft(\pure[]^t\mright)\mright).
    \end{align*}
    This yields the following
    \begin{align*}
        T\epsilon &\ge \sum_{t=1}^T\sum_{\pure[]\in\Pi}\bbone[\pure[]=\pure[]^t] \mleft(\ut\mleft(\phi(\pure),\pure[-i]\mright) - \ut\mleft(\pure[]\mright)\mright) \\
        & = \sum_{\pure[]\in\Pi}\mleft(\sum_{t=1}^T\bbone[\pure[]=\pure[]^t]\mright) \mleft(\ut\mleft(\phi(\pure),\pure[-i]\mright) - \ut\mleft(\pure[]\mright)\mright) \\
        & = T \sum_{\pure[]\in\Pi}\muv[\pure[]]\mleft(\ut\mleft(\phi(\pure),\pure[-i]\mright) - \ut\mleft(\pure[]\mright)\mright).
    \end{align*}
    This is precisely the definition of $\epsilon$-EFCCE (Definition~\ref{def:efcce}), as we wanted to show.
\end{proof}

\subsection{Proof for \cref{sec:pt1}}

\thRegretMinimizer*
\begin{proof}
We start by recalling the known regret bounds for CFR~\citep[Theorems 3 and 4]{zinkevich2008regret} and regret matching~\citep{hart2000simple}.

\begin{lemma}[Known regret bound for CFR]\label{lemma:cfr}
Let $i\in[n]$ and $\info\in\Info$, and consider the CFR algorithm run on the set $\Q_{\info}$. Then, for any sequence of linear utility functions $\tilde{\ell}^1,\ldots,\tilde{\ell}^T : \Q \to \bbR$ with range upper bounded as $\max_{\q,\q' \in \Q_{\info}}(\tilde{\ell}^t(\q) - \tilde{\ell}^t(\q')) \le U$ at all $t$, the regret cumulated by the CFR algorithm satisfies the inequality
\[
        R_{\textsc{cfr}}^T\le U|\Seqs_{\info}|\sqrt{T}.
\]
\end{lemma}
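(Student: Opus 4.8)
The plan is to reconstruct the standard CFR analysis of \citet{zinkevich2008regret}, specialized to the subtree polytope $\Q_\info$, and then to collapse the resulting per-information-set bounds into the single quantity $|\Seqs_\info|$. First I would recall the mechanics of CFR run on $\Q_\info$: the algorithm instantiates one local regret minimizer (regret matching) at every information set $\info'\succeq\info$, each operating over the probability simplex $\Delta^{|\A(\info')|}$ of actions available at $\info'$, and fed at each round $t$ with the \emph{counterfactual utility} induced by the linear utility $\tilde\ell^t$ and the current strategy iterate. Writing $\tilde\ell^t(\q)=\langle\vec{g}^t,\q\rangle$ for its gradient $\vec{g}^t$, the counterfactual value of an action $a\in\A(\info')$ is the inner product of $\vec{g}^t$ with the subtree flow that reaches $\info'$ with unit mass, plays $a$, and then follows the current continuation; this is exactly the local loss observed by the regret minimizer sitting at $\info'$.

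Second, I would invoke the CFR regret-decomposition theorem \citep[Theorem 3]{zinkevich2008regret}, which bounds the full regret on $\Q_\info$ by the sum of the positive parts of the immediate counterfactual regrets,
\[
R_{\textsc{cfr}}^T \;\le\; \sum_{\info'\succeq\info} R^{T,+}_{\info'},
\]
where $R^{T}_{\info'}$ is the regret accumulated by the local minimizer at $\info'$ and $R^{T,+}_{\info'}\defeq\max\{0,R^{T}_{\info'}\}$. The crucial quantitative input is that each local counterfactual utility has range at most $U$: evaluating $\tilde\ell^t$ at the two vertices of $\Q_\info$ that reach $\info'$ deterministically and differ only in the action played at $\info'$ shows that the gap between any two counterfactual action values at $\info'$ is at most $\max_{\q,\q'\in\Q_\info}(\tilde\ell^t(\q)-\tilde\ell^t(\q'))\le U$.

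Third, with the range controlled, I would apply the regret-matching bound of \citet{hart2000simple}, namely $R^{T,+}_{\info'}\le U\sqrt{|\A(\info')|\,T}$, and substitute it into the decomposition. Finally I would relax $\sqrt{|\A(\info')|}\le|\A(\info')|$ (valid since $|\A(\info')|\ge 1$) and use the identity $\sum_{\info'\succeq\info}|\A(\info')|=|\Seqs_\info|$, which is immediate from $\Seqs_\info=\{(\info',a):\info'\succeq\info,\ a\in\A(\info')\}$, to obtain
\[
R_{\textsc{cfr}}^T \le \sqrt{T}\sum_{\info'\succeq\info}U\sqrt{|\A(\info')|} \le U\sqrt{T}\sum_{\info'\succeq\info}|\A(\info')| = U|\Seqs_\info|\sqrt{T}.
\]

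I expect the main obstacle to be the range control in the second step: one must carefully justify that the counterfactual utilities handed to each local simplex minimizer have range bounded by the \emph{global} range $U$ of $\tilde\ell^t$ over $\Q_\info$, rather than by some larger game-dependent constant. This amounts to exhibiting, for each information set $\info'$ and each pair of actions there, two genuine vertices of $\Q_\info$ whose $\tilde\ell^t$-values bracket the counterfactual value difference, which is where the identity $\Q_\info=\co\Pure_\info$ (Kuhn's theorem) enters: it guarantees that the relevant counterfactual differences are realized by honest deterministic continuation strategies in $\Pure_\info$.
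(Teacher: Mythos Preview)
The paper does not actually prove this lemma; it is stated as a known result and attributed to Theorems~3 and~4 of \citet{zinkevich2008regret}. Your reconstruction of the standard CFR argument---the decomposition of full regret into positive immediate counterfactual regrets, the regret-matching bound $R^{T,+}_{\info'}\le U\sqrt{|\A(\info')|\,T}$ at each information set, and the aggregation via $\sqrt{|\A(\info')|}\le|\A(\info')|$ together with $\sum_{\info'\succeq\info}|\A(\info')|=|\Seqs_\info|$---is correct and is exactly the content of those cited theorems specialized to the subtree polytope $\Q_\info$. Your identification of the range-control step as the place requiring care is apt, and the vertex-comparison argument you sketch (realizing counterfactual value differences as $\tilde\ell^t$-differences between genuine strategies in $\Q_\info$) is the standard way to handle it.
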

\begin{lemma}[Known regret bound for regret matching]\label{lemma:regret matching}
Consider the regret matching algorithm applied to a simplex domain $\Delta^m$. For any sequence of linear utility functions $\tilde{\ell}^1,\ldots,\tilde{\ell}^T : \Delta^m \to \bbR$ with range upper bounded as $\max_{\vec{x},\vec{x}' \in \Delta^m}(\tilde{\ell}^t(\vec{x}) - \tilde{\ell}^t(\vec{x}')) \le U$ at all $t$, the regret cumulated by the regret matching algorithm satisfies the inequality
    \[
        R_{\textsc{rm}}^T\le U m \sqrt{T}.
    \]
\end{lemma}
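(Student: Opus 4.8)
The plan is to prove this classical bound by the Blackwell-approachability potential argument specialized to the simplex. Write the round-$t$ utility as $\tilde\ell^t(\vec{x}) = \langle \vec{g}^t, \vec{x}\rangle$ (linearity), and recall that regret matching on $\Delta^m$ maintains the cumulative regret vector $\vec{R}^{t}\defeq\sum_{s=1}^{t}\vec{r}^s\in\bbR^m$, where $r_k^s \defeq \tilde\ell^s(\vec{e}_k) - \tilde\ell^s(\vec{x}^s)$ and $\vec{e}_k$ is the $k$-th vertex, and at round $t$ plays $\vec{x}^t = [\vec{R}^{t-1}]^+ / \|[\vec{R}^{t-1}]^+\|_1$ whenever this denominator is positive (and an arbitrary distribution, e.g.\ uniform, otherwise), with $[\cdot]^+\defeq\max\{\cdot,0\}$ applied componentwise. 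The quantity to bound is $R^T_{\textsc{rm}} = \max_{\vec{x}^\ast\in\Delta^m}\sum_{t=1}^T(\tilde\ell^t(\vec{x}^\ast)-\tilde\ell^t(\vec{x}^t)) = \max_{k\in[m]} R^T_k$, the last equality holding because a linear function on the simplex attains its maximum at a vertex.

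First I would introduce the potential $\Phi^t \defeq \sum_{k=1}^m \big([R_k^t]^+\big)^2 = \big\|[\vec{R}^t]^+\big\|_2^2$, with $\Phi^0 = 0$. Using the elementary inequality $\big([a+b]^+\big)^2 \le \big([a]^+\big)^2 + 2[a]^+ b + b^2$ for all $a,b\in\bbR$ (verified by a short case analysis on the signs of $a$ and of $a+b$), applied componentwise with $a = R_k^{t-1}$ and $b = r_k^t$, I obtain $\Phi^t \le \Phi^{t-1} + 2\langle [\vec{R}^{t-1}]^+, \vec{r}^t\rangle + \|\vec{r}^t\|_2^2$. The crux of the argument is that the cross term vanishes: when $\|[\vec{R}^{t-1}]^+\|_1 > 0$, substituting $r_k^t = g_k^t - \langle \vec{g}^t,\vec{x}^t\rangle$ and $\vec{x}^t = [\vec{R}^{t-1}]^+/\|[\vec{R}^{t-1}]^+\|_1$ gives $\langle [\vec{R}^{t-1}]^+, \vec{r}^t\rangle = \langle [\vec{R}^{t-1}]^+, \vec{g}^t\rangle - \|[\vec{R}^{t-1}]^+\|_1\,\langle \vec{g}^t, \vec{x}^t\rangle = 0$; and when $\|[\vec{R}^{t-1}]^+\|_1 = 0$ the term is trivially zero since $[\vec{R}^{t-1}]^+ = \vec{0}$. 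Hence $\Phi^t - \Phi^{t-1} \le \|\vec{r}^t\|_2^2$ for every $t$.

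Next I would telescope this inequality over $t = 1,\dots,T$ to get $\Phi^T \le \sum_{t=1}^T \|\vec{r}^t\|_2^2$. Each coordinate satisfies $|r_k^t| = |\tilde\ell^t(\vec{e}_k) - \tilde\ell^t(\vec{x}^t)| \le U$ by the range hypothesis, since both $\vec{e}_k$ and $\vec{x}^t$ lie in $\Delta^m$; therefore $\|\vec{r}^t\|_2^2 \le m U^2$ and $\Phi^T \le mU^2 T$. Finally, if some $R_k^T \ge 0$ then $R^T_{\textsc{rm}} = \max_k R_k^T \le \big\|[\vec{R}^T]^+\big\|_\infty \le \big\|[\vec{R}^T]^+\big\|_2 = \sqrt{\Phi^T} \le U\sqrt{mT}$, and if every $R_k^T < 0$ then $R^T_{\textsc{rm}} < 0$ and the bound is immediate. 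Since $\sqrt{m}\le m$ for $m\ge 1$, this yields $R^T_{\textsc{rm}} \le U\sqrt{mT} \le Um\sqrt{T}$, as claimed.

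I do not anticipate a genuine obstacle: this is a textbook Blackwell-style argument, and the only points needing care are the componentwise verification of the quadratic inequality $\big([a+b]^+\big)^2 \le \big([a]^+\big)^2 + 2[a]^+b + b^2$ and the bookkeeping in the degenerate round where all cumulative regrets are nonpositive (handled above by noting $[\vec{R}^{t-1}]^+ = \vec{0}$ there). It is worth remarking that the statement as phrased is slightly loose: the argument above actually delivers the sharper bound $U\sqrt{mT}$, from which $Um\sqrt{T}$ follows for free, and an identical potential argument underlies the CFR bound of \cref{lemma:cfr} once it is applied locally at each information set.
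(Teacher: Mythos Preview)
Your proof is correct and complete: it is the standard Blackwell--approachability potential argument for regret matching, and every step checks out, including the treatment of the degenerate round and the observation that the bound actually tightens to $U\sqrt{mT}$. The paper, however, does not prove this lemma at all; it simply records it as a known fact with a citation to \citet{hart2000simple} and uses it as a black box inside the proof of \cref{th: regret minimizer}. So there is no paper proof to compare against---you have supplied precisely the classical argument that the paper chose to outsource.
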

\noindent
By~\cref{prop: convex hull} and by construction of Algorithm~\ref{th: regret minimizer} we have that
\begin{equation}\label{eq:regret bound}
R^T\le R_{\Delta^{|\Info|}}^T + \max_{\hatinfo\in\Info} R_{\hatinfo}^T.
\end{equation}
The loss function observed by the regret minimizer $\cR_{\Delta^m}$ at time $t$ is $\ell^t_{\lambdav}\gets \Delta^{|\Info|}\ni \lambdav\mapsto \sum_{\hatinfo\in\Info}\lambdav[\hatinfo]L^t(\tdev[\hatinfo][\q^t_{\hatinfo}])$ (Line~\ref{line:loss rm delta}). Since $\tdev(\q)\in\Q$ for any $\q\in\Q$, we have that the maximum range of $\ell_{\lambdav}^t$ is at most equal to the maximum range of $L^t$. Therefore, by~\cref{lemma:regret matching} we have $R_{\Delta^{|\Info|}}^T\le U|\Info|\sqrt{T}$. 
We not turn our attention to the regret minimizers $\cR_{\hatinfo}$. Fix $\hatinfo\in\Info$, by Proposition~\ref{prop: affine} the loss observed at time $t$ by the CFR algorithm running on the set $Q_{\hatinfo}^T$ is $L^t(g_{\hatinfo}(\cdot))-L^t(g_{\hatinfo}(0))$. Then, the range of this linear function is equal to $\max_{\q,\q'\in\Q_{\hatinfo}} L^t(g_{\hatinfo}(\q))-L^t(g_{\hatinfo}(\q'))$, which is upper bounded by $U$ since $g_{\hatinfo}$ maps sequence-form strategies into valid sequence-form strategies. By Proposition~\ref{prop: affine} we have that $R_{\hatinfo}^T$ is at most equal to the regret cumulated by CFR run on $\Q_{\hatinfo}$. This, together with  \cref{lemma:cfr}, yields that, for any $\hatinfo\in\Info$, $R^T_{\hatinfo}\le U|\Seqs_{\hatinfo}|\sqrt{T}$.
Then, by substituting into~\eqref{eq:regret bound},
\[
R^T\le U|\Info|\sqrt{T} + \max_{\hatinfo\in\Info} U|\Seqs_{\hatinfo}|\sqrt{T}\le U|\Info|\sqrt{T} + U|\Seqs|\sqrt{T} \le 2|\Seqs|\sqrt{T}.
\]
This concludes the proof.
\end{proof}

\corollaryEfcce*
\begin{proof}
    \cref{th: regret minimizer} establishes that \cref{alg:regret minimizer} is a regret minimizer for the set $\co\Ph \supseteq \Ph$. \cref{thm:closedForm} establishes that \cref{alg:fixed point} returns a fixed point $\q \in \Q$ for any $\phi \in \co\Ph$. Hence, by using the result by \citet{Gordon08:No}, \cref{algo:final} is a $\Ph$-regret minimizer for the set $\Q$ for each player $i$. At each time $t$, \cref{algo:final} returns a randomized strategy $\q^{(i),t}\in\Q$ that Player~$i$ should play. A standard application of the Azuma-Hoeffding inequality shows that by sampling actions according to $\q^{(i),t}$, the $\Ph$-regret incurred by Player~$i$ grows by an amount bounded above by $O(\sqrt{T\log(1/\delta)})$ with probability at least $1-\delta$, for any $\delta\in(0,1)$. Hence, by invoking \cref{thm:euclid}, with probability at least $1-\delta$, after any $T$ iterations the empirical frequency of play is an $\epsilon$-EFCCE where
    \begin{equation}\label{eq:delta bound}
        \epsilon = O\mleft(\frac{1}{T}(\sqrt{T} + \sqrt{T\log(1/\delta)})\mright) = O\mleft(\frac{1}{\sqrt{T}}(1 + \log(1/\delta))\mright).
    \end{equation}
    This concludes the proof of the first part of the statement. Going from the high-probability regret guarantee for any $\delta \in (0,1)$ and $T$ given in \eqref{eq:delta bound} to almost-sure convergence in the limit as $T\to\infty$ is a direct application of the classic Borel-Cantelli lemma.
\end{proof}

\subsection{Proof for \cref{sec:closed form}}

The following result will be useful when proving~\cref{thm:closedForm}.
\begin{lemma}\label{lemma: phi decomposition}
For any $\phi=\sum_{\info'\in\Info}\lambdav[\info']\tdev[\info'][\q_{\info'}]\in\co\Ph$, $\q\in\Q$, and $\sigma=(I,a)\in\Seqs$, it holds
\[
\phi(\q)[\sigma]-\q[\sigma]=\mleft(\sum_{\info'\preceq \info}\lambdav[\info']\q_{\info'}[\sigma]\q[\parseq[i][\info']]\mright)-d_{\sigma}\q[\sigma].
\]
\end{lemma}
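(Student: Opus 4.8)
The plan is to establish the identity coordinate by coordinate, by unfolding the action of $\phi$ on a fixed entry. Fix $\sigma=(\info,a)\in\Seqs$. By \cref{def: coarse trigger dev func} and the matrix representation $\Mdev[\info'][\q_{\info'}]$ of a coarse trigger deviation function introduced right afterwards (see also \cref{sec: deviation functions}), for each $\info'\in\Info$ the row of $\Mdev[\info'][\q_{\info'}]$ indexed by $\sigma$ is supported on a single column: it is the diagonal entry, equal to $1$, when $\sigma\not\succeq\info'$; and it is the entry in column $\parseq[i][\info']$, equal to $\q_{\info'}[\sigma]$, when $\sigma\succeq\info'$. Observing that, for the sequence $\sigma=(\info,a)$, the relation $\sigma\succeq\info'$ is equivalent to $\info'\preceq\info$ (and that in that case $\sigma\in\Seqs_{\info'}$, so $\q_{\info'}[\sigma]$ is a legitimate coordinate of $\q_{\info'}\in\Q_{\info'}$), this gives, for every $\q$ and every $\info'\in\Info$,
\[
\tdev[\info'][\q_{\info'}](\q)[\sigma]=\begin{cases}\q[\sigma]&\text{if }\info'\not\preceq\info,\\ \q_{\info'}[\sigma]\,\q[\parseq[i][\info']]&\text{if }\info'\preceq\info.\end{cases}
\]

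Next I would take the $\lambdav$-weighted sum of this identity over $\info'\in\Info$, using linearity of $\phi=\sum_{\info'\in\Info}\lambdav[\info']\tdev[\info'][\q_{\info'}]$. Separating the indices with $\info'\preceq\info$ from the remaining ones, and recalling that $d_\sigma=\sum_{\info'\preceq\info}\lambdav[\info']$ while $\sum_{\info'\in\Info}\lambdav[\info']=1$ because $\lambdav\in\Delta^{|\Info|}$, the contribution of the indices $\info'\not\preceq\info$ collapses to $\mleft(\sum_{\info'\not\preceq\info}\lambdav[\info']\mright)\q[\sigma]=(1-d_\sigma)\,\q[\sigma]$. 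Hence
\[
\phi(\q)[\sigma]=(1-d_\sigma)\,\q[\sigma]+\sum_{\info'\preceq\info}\lambdav[\info']\,\q_{\info'}[\sigma]\,\q[\parseq[i][\info']],
\]
and subtracting $\q[\sigma]$ from both sides gives precisely the claimed formula.

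I do not expect a genuine obstacle here: the only step needing care is the case analysis on the rows of $\Mdev[\info'][\q_{\info'}]$ --- verifying that each such row is supported on one column, and that $\sigma=(\info,a)\succeq\info'\iff\info'\preceq\info$, which is also what guarantees that $\q_{\info'}[\sigma]$ is well defined exactly for the $\info'$ appearing in the final sum. Everything else is the one-line rearrangement above, relying only on $\lambdav\in\Delta^{|\Info|}$ and the definition of $d_\sigma$ used in \cref{alg:fixed point}.
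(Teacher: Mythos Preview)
Your argument is correct and follows essentially the same route as the paper's own proof: both expand $\phi(\q)[\sigma]$ as the $\lambdav$-weighted sum of $\tdev[\info'][\q_{\info'}](\q)[\sigma]$, perform the same two-case split according to whether $\sigma\succeq\info'$, use $\lambdav\in\Delta^{|\Info|}$ to collapse the ``identity'' branch to $(1-d_\sigma)\q[\sigma]$, and then subtract $\q[\sigma]$. Your explicit verification that $\sigma=(\info,a)\succeq\info'\iff\info'\preceq\info$ and that $\q_{\info'}[\sigma]$ is well defined in that case is a nice bit of extra care, but there is no methodological difference.
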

\begin{proof}
Fix a sequence $\sigma\in\Seqs$. Then, by definition of the linear mappings $\tdev[\info'][\q_{\info'}]$, we have
\begin{align*}
    \phi(\q)[\sigma] & = \sum_{\info'\in\Info}\lambdav[\info']\tdev[\info'][\q_{\info'}](\q)[\sigma]
    \\ & = 
    \sum_{\info'\in\Info}\lambdav[\info']\begin{cases} \q_{\info'}[\sigma] \q[\parseq[i][\info']] & \text{if }\sigma\succeq\info'\\ \q[\sigma]  & \text{\normalfont otherwise}\end{cases}
    \\ & = 
    \mleft(1-\sum_{\info'\preceq\sigma}\lambdav[\info']\mright)\q[\sigma] 
    + \sum_{\info'\preceq \sigma}\lambdav[\info']\q_{\info'}[\sigma]\q[\parseq[i][\info']].
\end{align*}
By re-arranging the above equation we obtain the statement.
\end{proof}

\closedForm*
\begin{proof}
The proof is divided into three parts: (i) we show that, for any $\phi\in\co\Ph$, the vector $\q^\star\in\bbR^{|\Seqs|}$ obtained through Algorithm~\ref{alg:fixed point} is such that $\q^\star\in\Q$ (i.e., it is a valid sequence-form strategy); (ii) we show that, for any $\phi\in\co\Ph$, the sequence-form strategy $\q^\star$ obtained via Algorithm~\ref{alg:fixed point} is such that $\phi(\q^\star)=\q^\star$; (iii) finally, we show that Algorithm~\ref{alg:fixed point} runs in time $O(|\Seqs|D^{(i)})$.

\textit{Part 1: $\q^\star$ is a sequence-form strategy}. By construction (Line~\ref{line:q star init}), $\q^\star[\emptyseq]=1$. Then, we need to show that, for each $\info\in\Info$, $\sum_{a\in\A(\info)}\q^\star[(I,a)]=\q^\star[\parseq[i][\info]]$ (see \cref{def:seq_polytope}). For any $\info\in\Info$ such that $d_{\sigma}=0$ it is immediate to see that the above constraint holds by construction (Line~\ref{line: set to uniform}). For each $\info\in\Info$ such that $d_{\sigma}\ne 0$ we have that
\begin{align*}
    \sum_{a\in\A(\info)}\q^\star[(\info,a)]  &=
    \frac{1}{d_{\sigma}}\mleft(\sum_{a\in\A(\info)}\sum_{\info'\preceq\info}\lambdav[\info']\q_{\info'}[(I,a)]\q^\star[\parseq[i][\info']]\mright)
    \\ &  =
    \frac{1}{d_{\sigma}}\mleft(\sum_{\info'\preceq\info}\lambdav[\info']\q^\star[\parseq[i][\info']]\mleft(\sum_{a\in\A(\info)}\q_{\info'}[(I,a)]\mright)\mright)
    \\ &  =
    \frac{1}{d_{\sigma}}\mleft(\sum_{\info'\preceq\info}\lambdav[\info']\q^\star[\parseq[i][\info']] \cdot\begin{cases} \q_{\info'}[\parseq[i][\info]] & \text{if } I' \prec I\\ 1 & \text{otherwise}\end{cases}\mright),
\end{align*}
where the first equality holds by Line~\ref{eq: fixed point} in Algorithm~\ref{alg:fixed point}, and the last equality holds because $\q_{\info'}\in \Q_{\info'}$.
%
We distinguish between two cases: if $d_{\parseq[i][\info]}=0$, then $\lambdav[\info']=0$ for each $\info'\prec\info$. Therefore, since we are assuming $d_{\sigma}\ne 0$, it must be the case that $d_{\sigma}=\lambdav[\info]\ne 0$. This yields the following
\begin{align*}
\sum_{a\in\A(\info)}\q^\star[(\info,a)] & =
\frac{1}{d_{\sigma}}\mleft(\sum_{\info'\preceq\info}\lambdav[\info']\q^\star[\parseq[i][\info']]\cdot\begin{cases} \q_{\info'}[\parseq[i][\info]] & \text{if } I' \prec I\\ 1 & \text{otherwise}\end{cases}\mright)
\\ & = \frac{1}{\lambdav[\info]}\mleft(\lambdav[\info]\q^\star[\parseq[i][\info]]\mright)
 = \q^\star[\parseq[i][\info]].
\end{align*}
Contrarily, if $d_{\parseq[i][\info]}\ne 0$, then $\q^\star[\parseq[i][\info]]$ was set according to Line~\ref{eq: fixed point}, and thus 
\begin{equation}\label{eq:parseq fixed point}
\q^\star[\parseq[i][\info]]=\frac{1}{d_{\parseq[i][\info]}}\mleft(\sum_{\info'\prec \info}\lambdav[\info']\q^\star[\parseq[i][\info']]\q_{\info'}[\parseq[i][\info]]\mright).
\end{equation}
By definition of $d_\sigma$ (Line~\ref{line:dsigma}), $d_{\sigma}=d_{\parseq[i][\info]}+\lambdav[\info]$. Then,
\begin{align*}
    \sum_{a\in\A(\info)}\q^\star[(\info,a)]  & 
     =
    \frac{1}{d_{\sigma}}\mleft(\sum_{\info'\preceq\info}\lambdav[\info']\q^\star[\parseq[i][\info']]\cdot\begin{cases} \q_{\info'}[\parseq[i][\info]] & \text{if } I' \prec I\\ 1 & \text{otherwise}\end{cases}\mright)
    \\ &  =
    \frac{1}{d_{\parseq[i][\info]}+\lambdav[\info]}\Bigg(\lambdav[\info]\q^\star[\parseq[i][\info]]
    +\sum_{\info'\prec\info}\lambdav[\info']\q^\star[\parseq[i][\info']]\q_{\info'}[\parseq[i][\info]]\Bigg)
    \\ &  =
    \frac{1}{d_{\parseq[i][\info]}+\lambdav[\info]}\mleft(\lambdav[\info]\q^\star[\parseq[i][\info]] + d_{\parseq[i][\info]} \q^\star[\parseq[i][\info]]\mright)
    =
    \q^\star[\parseq[i][\info]],
\end{align*}
where the second to last equality is obtained by Equation~\eqref{eq:parseq fixed point}. This concludes the first part of the proof.

\textit{Part 2: $\q^\star$ is a fixed point of $\phi$.} Fix a sequence $\sigma=(I,a)\in\Seqs$. We want to show that $\phi(\q^\star)[\sigma]-\q^\star[\sigma]=0$. If $\sum_{\info'\preceq\info}\lambdav[\info']=0$, then it immediately holds that $\phi(\q^\star)[\sigma]=\q^\star[\sigma]$. Otherwise, if $\sum_{\info'\preceq\info}\lambdav[\info']\ne 0$, by applying \cref{lemma: phi decomposition} and by subsequently substituting $\q^\star[\sigma]$ according to Line~\ref{eq: fixed point}, we obtain
\begin{align*}
    \phi(\q^\star)[\sigma]-\q^\star[\sigma] & 
    = 
    \mleft(\sum_{\info'\preceq \info}\lambdav[\info']\q_{\info'}[\sigma]\q^\star[\parseq[i][\info']]\mright)-d_{\sigma}\q^\star[\sigma]
    \\ &  =  
    \mleft(\sum_{\info'\preceq \info}\lambdav[\info']\q_{\info'}[\sigma]\q^\star[\parseq[i][\info']]\mright)
    - 
    \frac{d_{\sigma}}{d_{\sigma}}\mleft(\sum_{\info'\preceq \info}\lambdav[\info']\q_{\info'}[\sigma]\q^\star[\parseq[i][\info']]\mright)
    = 0.
\end{align*}
This concludes this part of the proof.

\textit{Part 3: time complexity.} For each sequence in $\Seqs\setminus\{\emptyseq\}$ (Line~\ref{line:for loop}), \cref{alg:fixed point} has to visit at most $D^{(i)}$ information sets as part of the sums required on Lines~\ref{line:dsigma} and~\ref{eq: fixed point}. This completes the proof.
\end{proof}

\section{Experimental Evaluation}\label{sec:exp details}

\subsection{Description of game instances}

The size (in terms on number of infosets and sequences) of the parametric instances we use as benchmark is described in~\cref{fig:dimensions}.
In the following, we provide a detailed explanation of the rules of the games.

\begin{figure}[H]
	\centering
		{\begin{minipage}{.4\textwidth}
			\small
			\begin{tabular}{c@{\hskip 6pt}ccccc}
				& \textbf{Players} & \textbf{Ranks} & \textbf{Player}& \textbf{Infosets} & \textbf{Sequences} \\[-.1mm]
				\midrule\\[-3mm]
				\multirow{3}{*}{Kuhn} & \multirow{3}{*}{3} & \multirow{3}{*}{4} & Player 1 & 16 & 33  \\[-.1mm]
				&&&  Player 2 & 16 & 33  \\[-.1mm]
				&&&  Player 3 & 16 & 33  \\[-.1mm]
				\midrule
				\multirow{3}{*}{Goofspiel} & \multirow{3}{*}{3} & \multirow{3}{*}{3} & Player 1 & 837 &  934 \\[-.1mm]
				&&&  Player 2 & 837 & 934  \\[-.2mm]
				&&&  Player 3 & 837 & 934  \\[-.2mm]
				\midrule
				\multirow{3}{*}{Leduc} & \multirow{3}{*}{3} & \multirow{3}{*}{3} & Player 1 & 3294 &  7687 \\[-.1mm]
				&&& Player 2 & 3294 & 7687  \\[-.1mm]
				&&& Player 3 & 3294 & 7687  \\[-.1mm]
				\bottomrule
			\end{tabular}\\[2mm]
		
			\hspace{-.5mm}
			\begin{tabular}{c@{\hskip 6pt}ccccc}
			& \textbf{Grid} & \textbf{Rounds} & \textbf{Player} & \textbf{Infosets} & \textbf{Sequences} \\[-.1mm]
			\midrule\\[-3mm]
			\multirow{2}{*}{Battleship} & \multirow{2}{*}{$(3,2)$} & \multirow{2}{*}{$3$} & Player 1 & 1413 & 2965 \\[-.1mm]
			&  &  & Player 2 &  1873 & 4101 \\[-.1mm]
			\bottomrule
		\end{tabular}
		\end{minipage}}
		\caption{Size of our game instances in terms of number of sequences and infosets for each player of the game.}
		\label{fig:dimensions}
\end{figure}

\paragraph{Kuhn poker}
The two-player version of the game was originally proposed by~\cite{kuhn1950simplified}, while the three-player variation is due to~\cite{farina2018exante}.
In a three-player Kuhn poker game with rank $r$, there are $r$ possible cards. Each player initially pays one chip to the pot, and she/he is dealt a single private card. 
The first player may {\em check} or {\em bet} ({\em i.e.,} put an additional chip in the pot). Then, the second player can check or bet after a first player's check, or {\em fold/call} the first player's bet. If no bet was previously made, the third player can either check or bet. Otherwise, she/he has to fold or call. After a bet of the second player (resp., third player), the first player (resp., the first and the second players) still has to decide whether to fold or to call the bet. At the showdown, the player with the highest card who has not folded wins all the chips in the pot.

\paragraph{Goofspiel}
This game was originally introduced by~\cite{ross1971goofspiel}. Goofspiel is essentially a bidding game where each player has a hand of cards numbered from $1$ to $r$ ({\em i.e.}, the rank of the game). A third stack of $r$ cards is shuffled and singled out as prizes. 
Each turn, a prize card is revealed, and each player privately chooses one of her/his cards to bid, with the highest card winning the current prize. In case of a tie, the prize card is discarded. After $r$ turns, all the prizes have been dealt out and the payoff of each player is computed as follows: each prize card’s value is equal to its face value and the players’ scores are computed as the sum of the values of the prize cards they have won.
We remark that due to the tie-breaking rule that we employ, even two-player instances of the game are general-sum. All the Goofspiel instances have {\em limited information}, {\em i.e.}, actions of the other players are observed only at the end of the game. This makes the game strategically more challenging, as players have less information regarding previous opponents' actions.

\paragraph{Leduc}
We use a three-player version of the classical Leduc hold'em poker introduced by~\citet{southey2005bayes}.
In a Leduc game instance with $r$ ranks the deck consists of three suits with $r$ cards each. As the game starts players pay one chip to the pot. There are two betting rounds. In the first one a single private card is dealt to each player while in the second round a single board card is revealed. The maximum number of raise per round is set to two, with raise amounts of 2 and 4 in the first and second round, respectively. 

\paragraph{Battleship}
Battleship is a parametric version of the classic board game, where two competing fleets take turns at shooting at each other. For a detailed explanation of the Battleship game see the work by~\cite{farina2019correlation} that introduced it. Our instance has loss multiplier equal to $2$, and one ship of length $2$ and value $1$ for each player

\subsection{Details about experimental setup}

All experiments were run on a machine with a 16-core 2.80GHz CPU and 32GB of RAM. Fixed points for EFCE dynamics were computed via the Eigen library version 3.3.7~\citep{eigenweb}.

At each time $t$, we let all players in the game pick their \emph{mixed} strategy $\vec{q}^{(i),t}\in\Q$ according to \cref{algo:final}. Each player then observed their own linear utility function.

Note that no randomization is used in the experiments. Indeed, while it would technically be possible to have all players sample and play a deterministic strategy $\vec{\pi}^{(i),t}$ from $\vec{q}^{(i),t}$ and later compute the empirical frequency of the $\vec{\pi}^{(i),t}$, we instead compute the EFCCE gap of the expected empirical frequency directly. This greatly improves the convergence rate in practice.
\fi

\end{document}